\documentclass[10pt,a4paper]{article}

\usepackage[T1]{fontenc}
\usepackage[utf8]{inputenc}
\usepackage{lmodern}
\usepackage{microtype}
\usepackage{amsmath,amssymb,amsthm,mathtools,bm}
\usepackage{graphicx}
\usepackage{float}
\usepackage{booktabs,tabularx,array,multirow}
\usepackage{enumitem}
\usepackage[margin=2.25cm]{geometry}
\usepackage{caption}
\usepackage{fancyhdr}
\usepackage[hidelinks]{hyperref}
\hypersetup{
  pdftitle={Bellman--Shoreline Search in Arbitrary Dimension},
  pdfauthor={Florentin Koch},
  pdfsubject={Online search for unknown affine hyperplanes; exponential oscillators and effective computability},
  pdfkeywords={online search, hyperplane search, Shoreline search, self-similarity, support functions, active memory, computability, precession}
}
\usepackage[nameinlink,noabbrev]{cleveref}

\newcommand{\R}{\mathbb{R}}
\newcommand{\Sph}{\mathbb{S}}
\newcommand{\conv}{\operatorname{conv}}
\newcommand{\inrad}{\operatorname{inrad}_{0}}

\newcommand{\Var}{\operatorname{Var}}
\newcommand{\relint}{\operatorname{relint}}
\newcommand{\CR}{\operatorname{CR}}
\newcommand{\sgn}{\operatorname{sgn}}
\newcommand{\dd}{\,\mathrm{d}}
\newcommand{\e}{\mathrm{e}}
\newcommand{\Tphys}{T_{\mathrm{phys}}}
\newcommand{\tphys}{t_{\mathrm{phys}}}
\newcommand{\cwork}{C^{\mathrm{work}}}
\newcommand{\ccand}{C^{\mathrm{cand}}}
\newcommand{\cstar}{C^{*}}
\newcommand{\BD}{B_D}

\theoremstyle{plain}
\newtheorem{theorem}{Theorem}[section]
\newtheorem{proposition}[theorem]{Proposition}

\theoremstyle{definition}
\newtheorem{definition}[theorem]{Definition}
\theoremstyle{remark}
\newtheorem{remark}[theorem]{Remark}

\setlist{itemsep=.2em,topsep=.4em}
\title{\textbf{Bellman--Shoreline Search in Arbitrary Dimension}\\[0.35em]
\large Exponential Vector Oscillators, Active Memory, Precession, and Effective Computability}
\author{Florentin Koch\\
\small École Polytechnique\\
\small \texttt{florentin.koch@polytechnique.edu}}
\date{29 August 2026}

\begin{document}
\maketitle

\begin{abstract}
We study online search for an unknown affine hyperplane in $\R^D$, for arbitrary finite dimension. We build on the self-similar cell theorem, the effectively bounded return theorem, and the exact support-function formulation established by Koch~\cite{koch-cell}: after quotienting by scale, any finite-value history can be replaced, with arbitrarily small loss in minimax value, by the bilateral repetition of one cell, while the useful memory is the support function of the historical convex hull, updated by a maximum. The present article does not reprove that reduction. It asks what becomes of its mechanism when the direction space grows from $\Sph^0$ to $\Sph^{D-1}$.

The discussion starts in $D=1$. Alternation, productivity, and an adversary placed just beyond the latest record lead directly to an equal-ripple principle; the optimal stationary cell doubles at every turn and yields the exact constant $9$. In $D=2$, the same mechanism becomes continuous: a relative equilibrium is a logarithmic spiral, whose bottleneck chord imposes a tangency condition, after which the pitch is selected by minimizing the stationary ratio. These cases motivate the central object
\[
  \Gamma(\sigma)=\e^{\kappa\sigma}\omega(\sigma),
  \qquad \omega(\sigma)\in\Sph^{D-1}.
\]
We collect its log-directional geometry, exponentially discounted memory, gauges, and recursive hyperspherical parametrization. The contribution of the present article has four layers. First, without any shape ansatz, the bottleneck of an exponential orbit has a certificate supported by at most $D$ genuine historical suppliers; at globally worst phases, the current point belongs to the active face. Second, within each regular active chamber, the bottleneck variation, supplier tangencies, pitch identity, age bound, and---in $D=3$---a regular implicit delay system are exact. Third, the odd-dimensional obstruction, the antipodal subclass, and the harmonic tower provide constraints and explicit laboratories without being promoted to global minimizers. Finally, the N-COMP theorem shows that, for every fixed finite $D$, $\cstar_D$ is computable and an algebraic polygonal $\varepsilon$-optimal cell can in principle be synthesized. The numerical screening through $D=10$ is kept separate from these theorems.
\end{abstract}

\noindent\textbf{Keywords.} online search; hyperplane search; Shoreline search; self-similarity; support functions; active memory; computability; precession.

\section{Introduction}

The Bellman--Shoreline problem asks a mobile agent starting at the origin to meet an affine hyperplane whose normal and distance are both unknown. A strategy is a locally rectifiable curve
\[
  \gamma:[0,\infty)\to\R^D,
  \qquad \gamma(0)=0,
\]
parametrized by arclength. The adversary chooses $u\in\Sph^{D-1}$ and $d>0$, hence
\[
  H(u,d)=\{x\in\R^D:\langle u,x\rangle=d\},
  \qquad
  \cstar_D=\inf_{\gamma}\sup_{u,d>0}\frac{T_\gamma(u,d)}{d}.
\]

The companion paper~\cite{koch-cell} treats the structural problem that precedes every shape conjecture: it removes the infinity of scales without assuming a spiral, an angular ordering, or internal stationarity. More precisely, its Theorems~4.1 and~5.1 provide qualitative and effective self-similar cell reductions, while its Proposition~7.1 gives the exact support--inradius formula. We use those results as a foundation.

\paragraph{Scope of the present article.}
This text does not reprove the cell reduction, nor does it claim a global optimizer in every dimension. It studies how the minimax mechanism changes when the normal space passes from two points to the sphere $\Sph^{D-1}$. Its logical chain is
\[
\begin{aligned}
  D=1 &\longrightarrow D=2
  \longrightarrow \Gamma(\sigma)=\e^{\kappa\sigma}\omega(\sigma)\\
  &\longrightarrow \text{active memory and bottleneck faces}\\
  &\longrightarrow \text{precession and candidate families}.
\end{aligned}
\]
The central new contribution is therefore not one particular curve, but a finite interface between a functional Bellman state and its worst bottleneck: a certificate of rank at most $D$, current-point contact, an envelope law, tangency, pitch, age, and supplier-delay equations. Precessing and harmonic families then serve as explicit laboratories constrained by these laws.

Three distinct compressions must not be conflated. The self-similar cell theorem compresses the minimax value to a repeated cell; sliding memory compresses the time horizon; Carathéodory's theorem on an exposed face compresses only an instantaneous bottleneck certificate. The third compression never makes the Bellman state finite-dimensional.

We separate four levels: exact identities, regular-chamber equations, explicit candidate families, and minimax conjectures. The spiral benchmark $13.811135\ldots$ is not identified with the global value, and no precessing family is declared optimal.

\section{Prerequisites from the companion paper}

This section records only the results from Koch~\cite{koch-cell} used below. It fixes notation and does not add a new proof step before the one-dimensional analysis.

\subsection{Support, common level, and the true online ratio}

For a trajectory $\gamma$, set
\[
  K_t=\conv\gamma([0,t]),
  \qquad
  h_t(u)=\max_{0\le s\le t}\langle u,\gamma(s)\rangle,
  \qquad
  m(t)=\min_{\|u\|=1}h_t(u)=\inrad(K_t).
\]
The competitive ratio is exactly
\[
  \CR(\gamma)=\sup_{t:m(t)>0}\frac{t}{m(t)}.
\]
This is Proposition~7.1 of~\cite{koch-cell}. Each normal therefore sees a one-dimensional record process, but all those records are generated by a single physical curve.

\subsection{Self-similar cells and memory by maximum}

The self-similar reduction of~\cite[Theorems~4.1 and~5.1]{koch-cell} identifies the infimal value with the infimum over bilateral repetitions of homothetic cells, up to arbitrarily small loss. The equivariant variant~\cite[Corollary~6.1]{koch-cell} allows a compact isometry:
\[
  \Gamma(s+T)=Qg\Gamma(s),
  \qquad Q>1,
\]
without imposing monotone rotation, a unique cycle, or stationarity inside the cell.

After quotienting by scale, one logarithmic step contracts the old convex hull by $a\in(0,1)$ and appends a fresh segment $c$:
\[
  H^+=\conv(aH\cup c),
  \qquad
  h^+(u)=\max\{a h(u),h_c(u)\}.
\]
This maximum law exactly preserves corners, returns, supplier changes, and pre-service.

\subsection{Sliding memory}

By~\cite[Theorem~12.2]{koch-cell}, if $\CR(\gamma)\le C$, then
\[
  K_t=\conv\gamma([t/C,t]).
\]
Thus useful history has the exact horizon $\log C$ in logarithmic time. This bounds memory age, but not the number of switches or the total period of a cell.

These are the three imported ingredients needed below. The next two propositions---equivalence of scale conventions and the equivariant cocycle---belong to the present paper. They connect the cell reduction to the higher-dimensional literature and prevent one from confusing cellular self-similarity with angular stationarity.

\subsection{Two technical bridges}

Hyperplane-search papers do not always use the same scale convention: some impose $d\ge1$, while others retain only an asymptotic coefficient. Homogeneity and bilateral cell repetition show that these choices do not alter the infimal value.

\begin{proposition}[Equivalence of scale conventions]
Set
\[
  C_D^{\mathrm{all}}
  =\inf_\gamma\sup_{u,d>0}\frac{T_\gamma(u,d)}{d},
  \qquad
  C_D^{\ge1}
  =\inf_\gamma\sup_{u,d\ge1}\frac{T_\gamma(u,d)}{d},
\]
and
\[
  C_D^{\mathrm{asym}}
  =\inf_\gamma\limsup_{d\to\infty}
  \sup_{u\in\Sph^{D-1}}\frac{T_\gamma(u,d)}{d}.
\]
Then
\[
  C_D^{\mathrm{all}}=C_D^{\ge1}=C_D^{\mathrm{asym}}.
\]
The same value is obtained if one allows an additive constant in $T_\gamma(u,d)\le Cd+B$ and minimizes only the coefficient $C$.
\end{proposition}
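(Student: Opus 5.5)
The plan is to establish the chain of inequalities
\[
  C_D^{\mathrm{asym}}\le C_D^{\ge1}\le C_D^{\mathrm{all}}\le C_D^{\mathrm{asym}},
\]
and to treat the additive-constant version as squeezed between the same quantities.

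\emph{The two easy inequalities.} For any fixed $\gamma$, the supremum over $d\ge1$ is taken over a subset of $d>0$, so it is at most the supremum over all $d>0$. Likewise, $\limsup_{d\to\infty}\sup_u$ is bounded by $\sup_{d\ge1}\sup_u$. Taking infima over $\gamma$ gives $C_D^{\mathrm{asym}}\le C_D^{\ge1}\le C_D^{\mathrm{all}}$.

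\emph{The additive-constant version.} Let $C^{\mathrm{aff}}$ denote the infimum of those $C$ for which some $\gamma$ and some $B$ satisfy $T_\gamma(u,d)\le Cd+B$ for all $u,d$. Taking $B=0$ gives $C^{\mathrm{aff}}\le C_D^{\mathrm{all}}$. Conversely, such a bound gives $\limsup_{d\to\infty}\sup_u T/d\le C$, hence $C_D^{\mathrm{asym}}\le C^{\mathrm{aff}}$. So $C^{\mathrm{aff}}$ lies between the same two quantities.

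\emph{The real content: $C_D^{\mathrm{all}}\le C_D^{\mathrm{asym}}$.} Fix $\gamma$ with $\limsup_{d\to\infty}\sup_u T_\gamma(u,d)/d<C$. Then there is $d_0$ with $T_\gamma(u,d)\le Cd$ for all $d\ge d_0$ and all $u$.

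\emph{Step 1 (rescaling, first attempt).} Set $\gamma_\lambda(t)=\lambda^{-1}\gamma(\lambda t)$. Homogeneity gives $T_{\gamma_\lambda}(u,d)=\lambda^{-1}T_\gamma(u,\lambda d)$. So $\gamma_\lambda$ is $C$-competitive for all $d\ge d_0/\lambda$. Letting $\lambda\to\infty$ pushes the bad region to arbitrarily small scales, but it never removes it.

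\emph{Step 2 (repair at small scales).} The fix is to invoke the self-similar cell reduction \cite[Theorems~4.1 and~5.1]{koch-cell}. Concretely, via the support--inradius formula (Proposition~7.1 of \cite{koch-cell}), the tail of $\gamma$ beyond time $t_0=Cd_0$ has a convex hull whose inradius satisfies $t\le C\,m(t)$ for all large $t$. The cell theorem extracts from this asymptotic behaviour a homothetic cell with $\Gamma(s+T)=Q\Gamma(s)$, whose bilateral repetition has $\sup_{t}t/m(t)\le C+\varepsilon$ for every $t>0$, down to scale $0$. Since the bilateral curve starts at the origin, it has no initial transient.

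The point that needs care is that the cell reduction is stated for the finite-value (all-$d$) problem. It must therefore be applied in a form that only uses the ratio on arbitrarily large scales, so that the argument is not circular. The cell construction only samples the curve on windows $[t/C,t]$ with $t$ large, via the sliding memory \cite[Theorem~12.2]{koch-cell}. So it should consume only the asymptotic hypothesis.

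\emph{Fallback if the cited statements cannot be read that way.} Use a direct compactness argument. Consider the normalized pieces $\lambda_k^{-1}\gamma(\lambda_k\,\cdot)$ with $\lambda_k\to\infty$. These are $1$-Lipschitz and start at $0$, so Arzelà--Ascoli gives a locally uniform limit $\tilde\gamma$. Lower semicontinuity of the hitting time under uniform convergence then shows that $\tilde\gamma$ satisfies $T\le Cd$ for every $d>0$. The subtle point is the behaviour near $t=0$, where it must be checked that the limit still leaves the origin appropriately.

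Finally, $\varepsilon\to0$ and $C\downarrow C_D^{\mathrm{asym}}$ conclude the proof. I expect Step 2, with its semicontinuity at small scales, to be the main obstacle.
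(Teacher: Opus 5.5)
Your primary route is the paper's own. The paper also starts the construction of \cite[Theorem~4.1]{koch-cell} on a geometric sequence of levels above $d_0$. It asserts that none of the construction's ingredients uses lower levels, and it lets the negative copies of the bilateral cell recreate all small scales. Your sandwich $C_D^{\mathrm{asym}}\le C^{\mathrm{aff}}\le C_D^{\mathrm{all}}$ for the additive-constant version is more direct than the paper's rerun of the argument beyond $d_0\gg B/\eta$. One caveat: \cite[Theorem~12.2]{koch-cell} assumes a global bound $\CR(\gamma)\le C$. It therefore does not by itself show that the cell construction consumes only the asymptotic hypothesis; the paper simply asserts this.

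What you undersell is your fallback. It is a complete and genuinely different proof that needs nothing from the companion paper.
\begin{itemize}
\item There is no difficulty near $t=0$. Fix $u$ and $d>0$. As soon as $\lambda_k d\ge d_0$ you have $T_{\gamma_k}(u,d)=\lambda_k^{-1}T_\gamma(u,\lambda_k d)\le Cd$, so some $s_k\in[0,Cd]$ satisfies $\langle u,\gamma_k(s_k)\rangle\ge d$.
\item Extract a convergent subsequence $s_k\to s$ and use uniform convergence on $[0,Cd]$. This gives $\langle u,\tilde\gamma(s)\rangle\ge d$, hence $T_{\tilde\gamma}(u,d)\le Cd$ for \emph{every} $d>0$.
\item The limit is only $1$-Lipschitz, but reparametrizing it by arclength can only shorten hitting times. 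Since $\tilde\gamma$ is unbounded, its length is infinite, so the reparametrized curve is a valid strategy.
\end{itemize}
This yields $C_D^{\mathrm{all}}\le C$ for every $C>C_D^{\mathrm{asym}}$.

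The blow-down argument is elementary and avoids any unverified claim about the internals of the cell construction. The paper's route additionally delivers a self-similar cell. The two combine well: blow down first to get a curve with a global ratio bound, then apply the cell theorem in its stated form. That removes the circularity you worried about.
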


\begin{proof}
The inequalities $C_D^{\mathrm{asym}}\le C_D^{\ge1}\le C_D^{\mathrm{all}}$ are immediate. Conversely, fix $\eta>0$ and a history whose asymptotic coefficient is at most $C_D^{\mathrm{asym}}+\eta$. There exists $d_0$ such that every level $d\ge d_0$ is completed with quotient at most $C_D^{\mathrm{asym}}+2\eta$. The construction in~\cite[Theorem~4.1]{koch-cell} may be started on a geometric sequence lying entirely above $d_0$: neither the quasi-return, the connector, nor payment for the past uses lower levels. It produces a homothetic cell whose bilateral repetition has ratio at most $C_D^{\mathrm{asym}}+3\eta$. Negative copies recreate all small scales. Hence $C_D^{\mathrm{all}}\le C_D^{\mathrm{asym}}+3\eta$, and $\eta\downarrow0$ concludes. If $T\le Cd+B$, then $T/d\le C+B/d$; apply the same argument beyond $d_0\gg B/\eta$.
\end{proof}

An equivariant cell may return to the same shape after rotation and dilation without having monotone angle or being a spiral. The following cocycle records exactly what is periodic and isolates zero net winding.

\begin{proposition}[Equivariant cocycle, winding, and absence of secular drift]
Suppose in dimension two that an equivariant cell satisfies
\[
  \Gamma(s+T)=Q R_\Delta\Gamma(s),
  \qquad Q>1.
\]
On phases where $\Gamma(s)\neq0$, choose continuous lifts
\[
  \rho(s)=\log\|\Gamma(s)\|,
  \qquad
  \theta(s)=\arg\Gamma(s).
\]
There exists $\nu\in\mathbb Z$ such that
\[
  \rho(s+T)=\rho(s)+H,
  \qquad
  \theta(s+T)=\theta(s)+\Omega,
  \qquad
  H=\log Q,
  \quad
  \Omega=\Delta+2\pi\nu.
\]
If $\Omega\neq0$ and $\kappa=H/\Omega$, then
\[
  w(s):=\rho(s)-\kappa\theta(s)
  \qquad\text{satisfies}\qquad
  w(s+T)=w(s).
\]
Cell periodicity does not imply that $\theta$ is monotone or that $\rho$ is a single-valued function of $\theta$. The case $\Omega=0$ is not excluded by the cell theorem.
\end{proposition}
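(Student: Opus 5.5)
We prove the statement by working in polar/logarithmic coordinates on the interval of phases where $\Gamma(s)\neq 0$, and tracking how the lifts change under the shift $s\mapsto s+T$. Write $\Gamma(s)=\e^{\rho(s)}(\cos\theta(s),\sin\theta(s))$ with $\rho,\theta$ continuous on a connected interval $I$ of nonvanishing phases. If $I$ is not invariant under the shift, we work on a maximal such interval and its translate. The cell relation gives $\|\Gamma(s+T)\|=Q\|\Gamma(s)\|$, so $\Gamma(s+T)\neq0$ exactly when $\Gamma(s)\neq0$. Taking logarithms yields $\rho(s+T)=\rho(s)+\log Q$ immediately, with $H=\log Q$.

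For the angle, $R_\Delta$ rotates the argument by $\Delta$ modulo $2\pi$, so
\[
  \theta(s+T)-\theta(s)-\Delta\in 2\pi\mathbb Z
\]
for every $s$. The left-hand side is continuous in $s$, since $\theta$ is a continuous lift and $s\mapsto\theta(s+T)$ is also continuous. A continuous function with values in the discrete set $2\pi\mathbb Z$ is constant on each connected interval. This gives the integer $\nu$ and $\Omega=\Delta+2\pi\nu$.

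One point needs care. If the nonvanishing set is disconnected, the lift $\theta$ is defined only piecewise. We then either state the result per connected component (possibly with component-dependent $\nu$), or use the reading intended by the statement: $\theta$ is a continuous lift along a connected phase interval stable under the shift. We expect this bookkeeping of lifts across zeros of $\Gamma$ to be the only real obstacle. The algebra itself is trivial.

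Given $\Omega\neq0$ and $\kappa=H/\Omega$, we compute
\[
  w(s+T)=\rho(s)+H-\kappa\theta(s)-\kappa\Omega=w(s),
\]
since $\kappa\Omega=H$. This proves the periodicity claim.

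The remaining assertions are non-implications, and we settle them by explicit examples of equivariant cells:
\begin{itemize}
  \item \textbf{Non-monotone $\theta$.} Take $\rho=\kappa\theta+w$ with $\theta(s)=\Omega s/T+\varepsilon\sin(2\pi s/T)$ and $\varepsilon$ large enough that $\theta'$ changes sign. Then $\theta$ is not monotone, and $\rho$ is not a single-valued function of $\theta$ when we choose, for example, $w(s)=\cos(2\pi s/T)$ with $\theta$ revisiting a value at different $w$. Both curves satisfy the cell relation by construction.
  \item \textbf{The case $\Omega=0$.} Take $\Delta=0$ and a radially expanding loop, $\Gamma(s)=\e^{Hs/T}(\cos\phi(s),\sin\phi(s))$ with $\phi$ $T$-periodic. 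This satisfies the cell relation with $\nu=0$, which shows that the cell theorem does not exclude $\Omega=0$.

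\end{itemize}
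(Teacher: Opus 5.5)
Your proof is correct and follows the paper's argument: the isometry $R_\Delta$ gives $\rho(s+T)=\rho(s)+\log Q$, the argument shifts by $\Delta$ modulo $2\pi$ with a constant winding integer, and $w(s+T)-w(s)=H-\kappa\Omega=0$. What you add makes explicit what the paper only asserts in words: the continuity argument for the constancy of $\nu$, the caveat about lifts across zeros of $\Gamma$, and the concrete examples (non-monotone $\theta$, $\rho$ not single-valued in $\theta$, and $\Omega=0$).
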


\begin{proof}
The similarity multiplies radius by $Q$ and adds $\Delta$ to the angle modulo $2\pi$, giving the two relations with a winding integer $\nu$. If $\Omega\neq0$, then
\[
  w(s+T)-w(s)=H-\frac{H}{\Omega}\Omega=0.
\]
A periodic function of cell phase may nevertheless contain angular returns, several radii at the same angle, and an online ordering distinct from the static ordering of the support front.
\end{proof}

From this point onward, all imported results are cited by their precise location in~\cite{koch-cell}. The first new object is the case in which the normal space contains only two points.

\section{Dimension one: the mechanism in full view}

Dimension one displays, without superfluous geometry, why alternation, equal ripple, and exponential growth appear.

\subsection{Alternation and productivity are forced}

Let $x_n$ be the successive turns that create new records. A finite-ratio strategy must serve both half-lines indefinitely. After deleting turns that create no new record, the signs alternate. Writing $a_n=|x_n|>0$, one may represent a step as
\[
  x_{n+1}=-\sgn(x_n)f_n(a_n),
\]
or absorb the sign into the notation and write $x_{n+1}=-f_n(x_n)$.

Alternation alone is not enough. When the searcher returns to the same side two turns later, it must exceed the earlier record:
\[
  a_{n+2}>a_n.
\]
Otherwise the full detour has created no new service on that half-line and may be deleted. This is the minimal productivity condition.

Let $S_n=\sum_{j=0}^n a_j$. The time at the turn establishing record $a_n$ is
\[
  t_n=2\sum_{j=0}^{n-1}a_j+a_n.
\]
Immediately after record $a_n$ is established, the adversary may place the target on the same side at distance $d=a_n+\varepsilon$. It is reached only after the complete excursion to the opposite side and the return. Letting $\varepsilon\downarrow0$ gives the exact scenario
\[
  R_n=1+2\frac{S_{n+1}}{a_n}.
\]
Thus the worst case is already visible just beyond the latest record that seemed secured.

\subsection{Writing both sides: equal ripple}

Consider a bilateral stationary cell without yet assuming the same factor at each half-turn:
\[
  a_{2k}=Aq^k,
  \qquad
  a_{2k+1}=Bq^k,
  \qquad q>1.
\]
The two asymptotic adversarial families are
\[
  R_+=1+2\frac{q(A+B)}{A(q-1)},
  \qquad
  R_-=1+2\frac{q(Aq+B)}{B(q-1)}.
\]
If $R_+>R_-$, too much has been invested on one side and too little on the other. The minimax balance therefore equalizes the two active ripples. The condition $R_+=R_-$ gives
\[
  B^2=A^2q.
\]
With $\rho=B/A=\sqrt q$, it follows that
\[
  \frac{a_{n+1}}{a_n}=\rho
  \qquad\text{for every }n.
\]
Equal ripple forces the same multiplicative advance on both sides.

\subsection{Why the optimal factor is two}

Once both sides are equalized, the problem becomes scalar. For $a_{n+1}=\rho a_n$,
\[
  R(\rho)=1+2\frac{\rho^2}{\rho-1},
  \qquad \rho>1.
\]
Hence
\[
  R'(\rho)=2\frac{\rho(\rho-2)}{(\rho-1)^2},
\]
and the unique minimum is attained at
\[
  \rho^*=2,
  \qquad
  R(2)=9.
\]
The factor two is therefore not a convention of the familiar zigzag; it is the repeated composition factor that equalizes the two sides and minimizes the worst target placed just beyond a record.

\subsection{Global closure: no nonstationary sequence does better}

The preceding argument explains the stationary mechanism. To certify global optimality among all productive zigzags, set
\[
  z_n=\frac{S_n}{a_n},
  \qquad
  \rho_{n+1}=\frac{a_{n+1}}{a_n}.
\]
Then
\[
  z_{n+1}=1+\frac{z_n}{\rho_{n+1}},
  \qquad
  R_n=1+2(z_n+\rho_{n+1}).
\]
If $\alpha:=\sup_n(z_n+\rho_{n+1})<4$, then $\rho_{n+1}\le\alpha-z_n$ and
\[
  z_{n+1}\ge1+\frac{z_n}{\alpha-z_n}=:F_\alpha(z_n).
\]
For $\alpha<4$,
\[
  F_\alpha(z)-z=\frac{z^2-\alpha z+\alpha}{\alpha-z}>0
\]
on $1\le z<\alpha$, with a strictly positive minimum. The sequence $z_n$ would therefore drift above $\alpha$, a contradiction. Consequently
\[
  \sup_n(z_n+\rho_{n+1})\ge4,
  \qquad
  \cstar_1=9.
\]

\subsection{The zigzag as an exponential oscillator}

At productive phases $\theta_n=n\pi$,
\[
  x_n=a_0\e^{\kappa_1\theta_n}\cos\theta_n,
  \qquad
  \kappa_1=\frac{\log2}{\pi}.
\]
The optimal zigzag is therefore a discrete exponential oscillator on $\Sph^0=\{-1,+1\}$. This interpretation, rather than the special geometry of a segment, is what will be transported to higher dimensions.

\begin{figure}[H]
  \centering
  \includegraphics[width=0.98\linewidth,trim=0 101bp 0 0,clip]{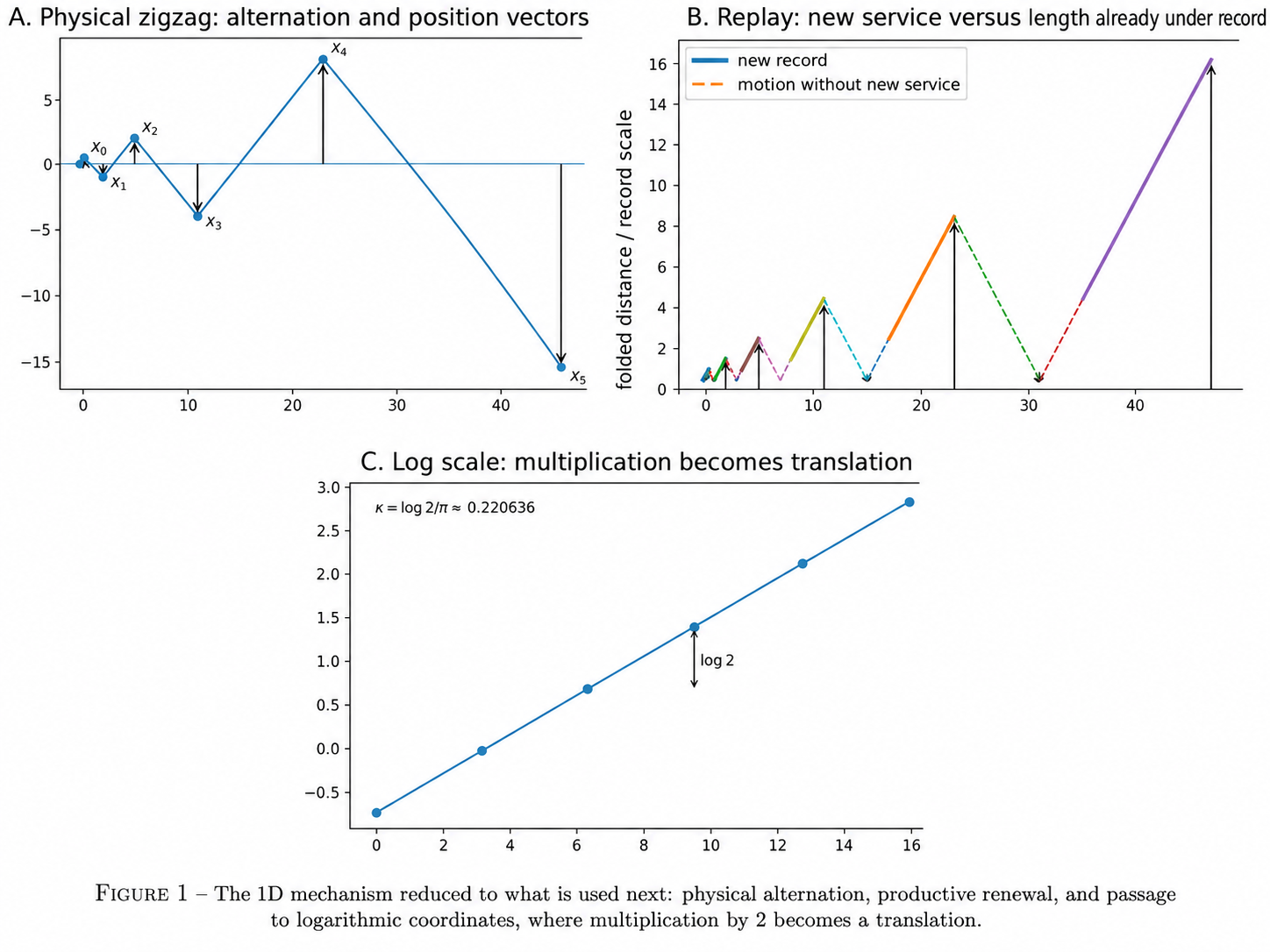}
  \caption{The one-dimensional mechanism reduced to what is used later: physical alternation, productive renewal, and the passage to logarithmic coordinates, where multiplication by two becomes a translation.}
\end{figure}

\section{Dimension two: continuous rotation of the same mechanism}

In dimension two the normal space is no longer $\Sph^0$ but the circle. For
\[
  u_\alpha=(\cos\alpha,\sin\alpha),
  \qquad
  X=R(\cos\theta,\sin\theta),
\]
one has
\[
  \langle u_\alpha,X\rangle=R\cos(\theta-\alpha).
\]
A single physical displacement therefore generates a continuum of correlated scalar cow paths. The alternation of $D=1$ becomes a continuous redistribution of service over $\Sph^1$.

\subsection{Relative equilibrium: why a spiral appears}

The relative-equilibrium theorem in~\cite[Theorem~12.8]{koch-cell} shows that a normalized state in relative equilibrium under rotation generates a logarithmic spiral. Within the stationary family we therefore write
\[
  \Gamma(\sigma)=\e^{\kappa\sigma}(\cos\sigma,\sin\sigma),
  \qquad \kappa>0.
\]
In present-scale units, the current point is $P(0)=(1,0)$ and a supplier of age $\tau$ is
\[
  P(\tau)=\e^{-\kappa\tau}(\cos\tau,-\sin\tau).
\]
The first issue is not yet to optimize $\kappa$, but to determine which old point can genuinely support the bottleneck chord together with the current point.

\subsection{First equation: the old supplier must be tangent}

If $P(\tau)$ is the active historical supplier, the line through $P(0)$ and $P(\tau)$ must be tangent to the discounted curve at $P(\tau)$. The condition
\[
  \det\bigl(P(0)-P(\tau),P'(\tau)\bigr)=0
\]
gives exactly
\[
  \e^{-\kappa\tau}=\cos\tau-\kappa\sin\tau.
\]
This equation does not choose the optimal pitch; it selects a delay compatible with a tangent chord. For the stationary Shoreline branch used below, we restrict to $\tau\in(\pi,2\pi)$ and require the chord to support the entire discounted past, with the perpendicular foot lying on the active segment. These global facts are proved in~\cite[Lemma~B.1]{koch-cell}. We write $\tau(\kappa)$ for a locally selected branch satisfying both tangency and this global support condition.

The distance from the origin to the tangent is
\[
  \mu(\kappa,\tau)=\frac{\e^{-\kappa\tau}}{\sqrt{1+\kappa^2}}.
\]
Since the normalized historical length of the stationary orbit is $\sqrt{1+\kappa^2}/\kappa$, the ratio in this family is
\[
  C_{2,\mathrm{stat}}(\kappa)
  =\frac{1+\kappa^2}{\kappa}\e^{\kappa\tau(\kappa)},
\]
where $\tau(\kappa)$ is the supporting branch just specified.

\subsection{Second equation: only now choose the pitch}

The pitch $\kappa$ is selected by minimizing this worst stationary ratio. The condition
\[
  \frac{\dd}{\dd\kappa}C_{2,\mathrm{stat}}(\kappa)=0,
\]
combined with implicit differentiation of the tangency equation, eliminates $\tau'(\kappa)$ and gives
\[
  \tan\tau=\frac{\kappa^2\tau}{1-\kappa\tau}.
\]
The logical order is therefore: historical contact first, pitch optimization second. Solving the two equations numerically yields
\[
  \kappa^*\approx0.2124695594,
  \qquad
  \tau^*\approx4.8588823628,
  \qquad
  C_{2,\mathrm{stat}}\approx13.81113517946.
\]
This is an explicit upper benchmark. The equality $\cstar_2=C_{2,\mathrm{stat}}$ remains open.

\subsection{A one-dimensional/two-dimensional dictionary}

The following table is a dictionary, not a proof that the optimal shapes coincide.

\begin{center}
\begin{tabularx}{\linewidth}{@{}p{0.25\linewidth}p{0.29\linewidth}X@{}}
\toprule
\textbf{1D} & \textbf{2D} & \textbf{Common meaning}\\
\midrule
$\Sph^0=\{-1,+1\}$ & $\Sph^1$ & direction-constraint space\\
signed position & projection $u_\alpha\cdot X$ & cow path seen by one normal\\
alternation & rotation & motion of the service allocation\\
amplitude $a_n$ & radius $R$ & physical scale\\
left/right record & support function & memory updated by maximum\\
minimum of two records & inradius $m$ & common guaranteed service\\
old record & old supplier & active memory\\
active point & active chord & bottleneck certificate\\
doubling & rotation--dilation & stationary relative growth\\
oscillator on $\Sph^0$ & oscillator on $\Sph^1$ & same mechanism, richer direction space\\
\bottomrule
\end{tabularx}
\end{center}

The essential point is already visible: each normal $\alpha$ receives a scalar projection of the same trajectory, and its record is the historical maximum of that projection. The different humps are never independent; they share one physical motion.

\begin{figure}[H]
  \centering
  \includegraphics[width=0.83\linewidth]{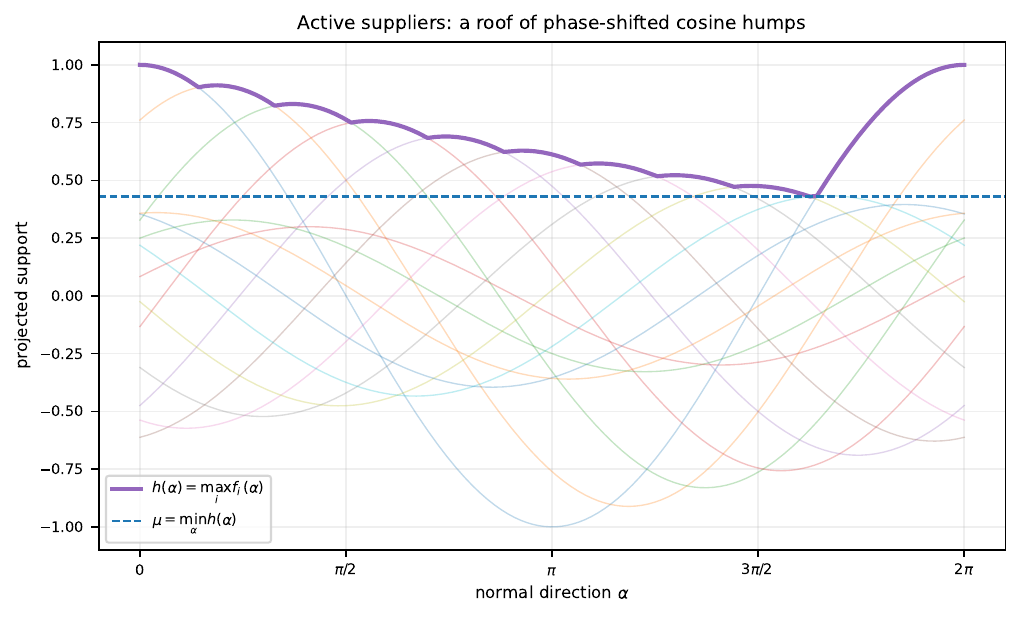}
  \caption{Dimension two: the active-supplier panel. One trajectory creates a roof of phase-shifted cosine humps; their maximum is the historical support, and the minimum of that roof is the common guaranteed service. Replacing the circle by $\Sph^{D-1}$ does not change the scalar channel, but multiplies the directions that must be served simultaneously.}
\end{figure}

\section{Exponential oscillation: the general object and its immediate consequences}

Dimensions one and two suggest separating scale growth from directional motion. We now make that definition once and derive the memory objects and gauges used throughout the rest of the article.

\begin{definition}[Exponential vector oscillator]
An exponential orbit is a trajectory
\[
  \Gamma(\sigma)=\e^{\kappa\sigma}\omega(\sigma),
  \qquad
  \omega(\sigma)\in\Sph^{D-1},
  \qquad
  \kappa>0.
\]
For every normal $u\in\Sph^{D-1}$,
\[
  \langle u,\Gamma(\sigma)\rangle
  =\e^{\kappa\sigma}\langle u,\omega(\sigma)\rangle.
\]
\end{definition}

A single moving vector thus generates a continuum of coupled scalar oscillators. Exponential growth controls relative forgetting, while motion on the sphere redistributes service among directions. No particular precession is assumed.

\begin{figure}[H]
  \centering
  \includegraphics[width=0.82\linewidth,trim=0 140bp 0 0,clip]{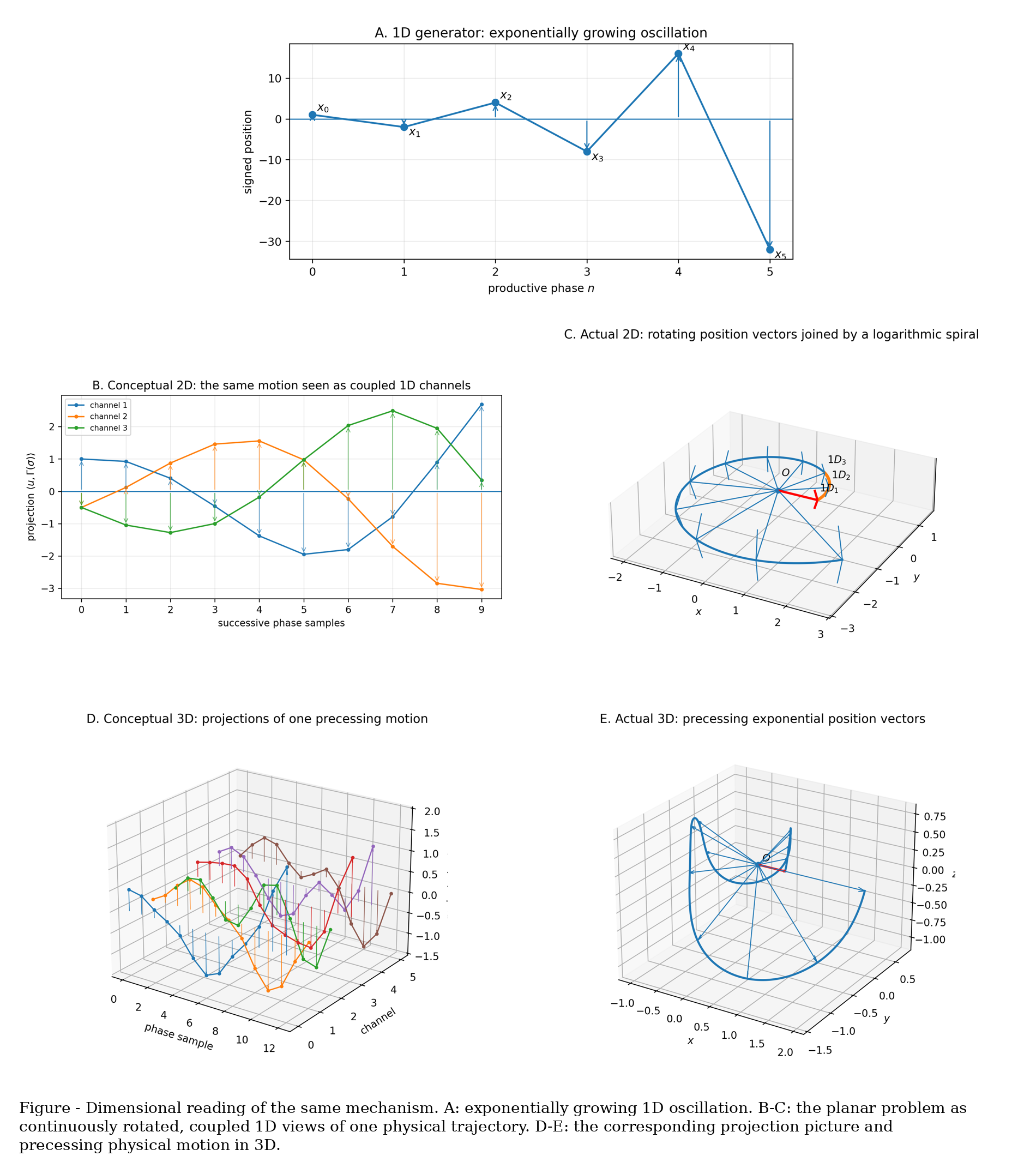}
  \caption{A recomposed dimensional reading. The one-dimensional generator is first duplicated conceptually across several directions and then recoupled by a single physical trajectory in dimensions two and three. The panels separate conceptual scalar channels from the unique physical motion that couples them.}
\end{figure}

\subsection{One physical motion, many scalar channels}

The object transported to dimension $D$ is not a collection of independent zigzags. It is the single sequence $\omega(\sigma)$ observed from many normals. The conceptual channel panels of Figure~3 are therefore only projections of one physical motion, not independent controls. This coupling replaces the separate alternation of two sides.

\subsection{Log-directional geometry}

Write $x=\e^\rho\omega$ with $\omega\in\Sph^{D-1}$. The Euclidean metric becomes
\[
  \dd s^2=\e^{2\rho}\bigl(\dd\rho^2+\dd s_{\Sph^{D-1}}^2\bigr).
\]
The quotient by scale therefore has the natural geometry $\R_\rho\times\Sph^{D-1}$. For the hyperplane $\langle u,x\rangle=d$, on the hemisphere where $\langle u,\omega\rangle>0$,
\[
  \rho=\log d-\log\langle u,\omega\rangle
  =\log d-\log\cos\beta,
\]
where $\beta$ is the angle between $u$ and $\omega$. The kernel $-\log\cos\beta$ is intrinsic to the directional problem and is not specific to the plane.

Figure~\ref{fig:scale-quotient} shows the two useful views of the quotient: the physical stacking of scales and, after normalization, a cell of fixed logarithmic height.

\begin{figure}[H]
  \centering
  \includegraphics[width=0.98\linewidth]{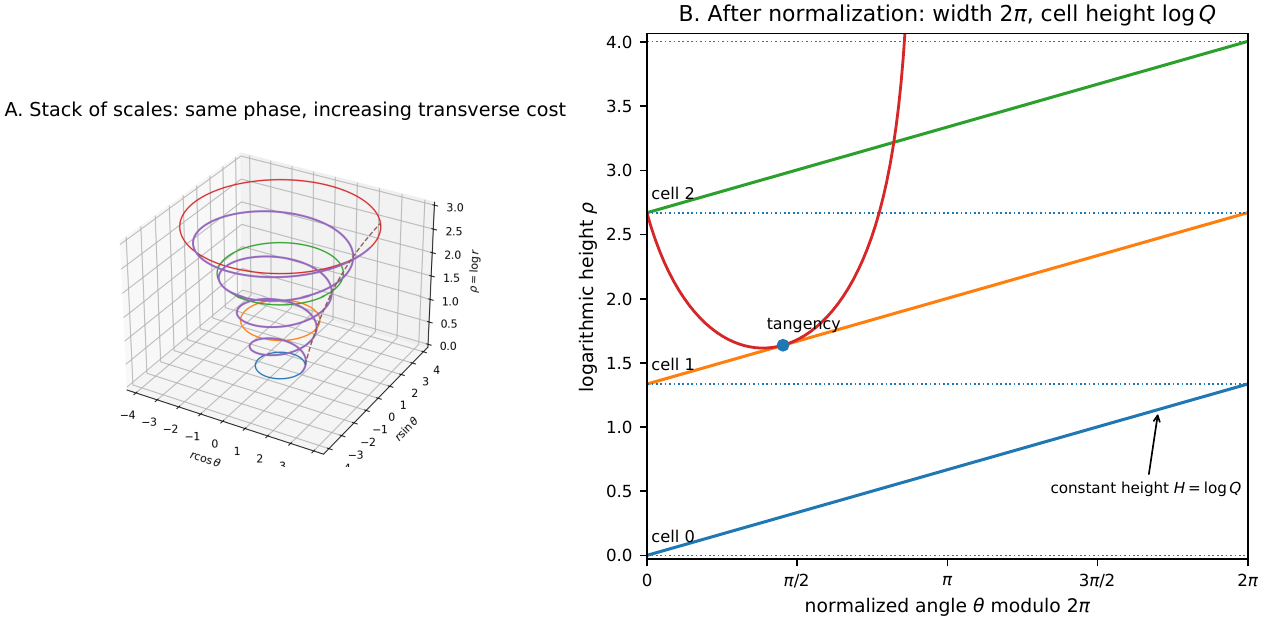}
  \caption{Scale quotient: physical stacking of copies, followed by the normalized representation in which multiplication becomes logarithmic translation.}
  \label{fig:scale-quotient}
\end{figure}

\subsection{Exponentially discounted memory}

For the orbit $\Gamma(\sigma)=\e^{\kappa\sigma}\omega(\sigma)$, a point created $\tau$ phase units earlier appears at present scale as
\[
  P_\sigma(\tau)=\e^{-\kappa\tau}\omega(\sigma-\tau).
\]
The normalized convex hull of the past is therefore
\[
  K_\sigma=\conv\{\e^{-\kappa\tau}\omega(\sigma-\tau):\tau\ge0\}.
\]
Its support function and bottleneck are
\[
  h_\sigma(n)
  =\sup_{\tau\ge0}\e^{-\kappa\tau}\langle n,\omega(\sigma-\tau)\rangle,
  \qquad
  \mu(\sigma)=\min_{\|n\|=1}h_\sigma(n).
\]
This is where the sliding-memory mechanism of~\cite[Theorem~12.2]{koch-cell} becomes an age memory: the old suppliers are precisely the points $P_\sigma(\tau)$, discounted by $\e^{-\kappa\tau}$.

If $\|\omega'(\sigma)\|=1$, the normalized historical length is
\[
  L=\int_0^\infty \e^{-\kappa\tau}\sqrt{\kappa^2+1}\,\dd\tau
  =\frac{\sqrt{1+\kappa^2}}{\kappa}.
\]
Thus the exact functional within the exponential class is
\[
  C_D[\kappa,\omega]
  =\sup_\sigma\frac{\sqrt{1+\kappa^2}}{\kappa\mu(\sigma)}.
\]
No harmonic, antipodal, or constant-$\mu$ hypothesis enters this definition.

\subsection{Dictionary of gauges}

We use consistently
\[
  t=\text{physical time},
  \qquad
  \sigma=\text{exponential phase},
  \qquad
  \tau=\text{supplier age}.
\]
When $\|\omega'\|=1$,
\[
  \Tphys(\sigma)=\frac{\sqrt{1+\kappa^2}}{\kappa}\e^{\kappa\sigma},
  \qquad
  \bar m=\frac{\kappa}{\sqrt{1+\kappa^2}}\mu,
  \qquad
  v=\frac{\kappa\omega+\omega'}{\sqrt{1+\kappa^2}}.
\]
The quantity $\bar m$ is the inradius in the Bellman gauge associated with physical time, and $v$ is the corresponding unit velocity. Whenever a dot is used below, it means differentiation with respect to logarithmic physical time $\ell=\log t_{\mathrm{phys}}$, not differentiation with respect to $t_{\mathrm{phys}}$ itself. Equivalently, $\dot f=t_{\mathrm{phys}}\,\dd f/\dd t_{\mathrm{phys}}$.

\subsection{Recursive hyperspherical parametrization}

Directional kinematics passes from one sphere to the next through the standard identity
\[
  \omega_{D+1}(t)
  =\bigl(\cos\phi_D(t),\,\sin\phi_D(t)\,\omega_D(\tau_D(t))\bigr).
\]
The derivative decomposes orthogonally:
\[
  \|\omega'_{D+1}\|^2
  =\phi_D'^2+\sin^2\phi_D\,\tau_D'^2\|\omega_D'\|^2.
\]
In coherent hyperspherical coordinates,
\[
  \|\omega_D'\|^2
  =\theta_{D-1}'^2+\sin^2\theta_{D-1}\,\theta_{D-2}'^2
  +\cdots+
  \left(\prod_{j=2}^{D-1}\sin^2\theta_j\right)\theta_1'^2.
\]
This formula is not a conjecture about the optimal shape. It is only a coherent way to construct and compare candidates on $\Sph^{D-1}$.

\subsection{The physical trajectory}

The exponential phase is denoted exclusively by $\sigma$; the symbol $t$ is reserved for physical time. In this gauge,
\[
  \Gamma_D(\sigma)=\e^{\kappa_D\sigma}\omega_D(\sigma),
  \qquad
  \|\Gamma_D'(\sigma)\|
  =\e^{\kappa_D\sigma}
   \sqrt{\kappa_D^2+\|\omega_D'(\sigma)\|^2}.
\]
When $\|\omega_D'\|=1$,
\[
  \tphys(\sigma)=\frac{\sqrt{1+\kappa_D^2}}{\kappa_D}\e^{\kappa_D\sigma}.
\]
All objects needed later are now in place: direction $\omega$, growth rate $\kappa$, suppliers $P_\sigma(\tau)$, hull $K_\sigma$, support $h_\sigma$, bottleneck $\mu$, and conversion to physical time.

\section{A dimensional staircase: what changes as $D$ grows}

The scalar mechanism does not change; what changes is the geometry required to redistribute service over a normal space of increasing dimension.

\begin{center}
\small
\begin{tabularx}{\linewidth}{@{}c p{0.17\linewidth} p{0.20\linewidth} p{0.15\linewidth} X@{}}
\toprule
$D$ & always guaranteed certificate rank & generic full-rank face & intrinsic spherical curvatures & candidate directional construction\\
\midrule
1 & $r\le1$ & $r=1$ & 0 & alternation of the two poles\\
2 & $r\le2$ & chord, $r=2$ & 0 & rotation on $\Sph^1$\\
3 & $r\le3$ & triangle, $r=3$ & 1 & precession on $\Sph^2$\\
4 & $r\le4$ & tetrahedron, $r=4$ & 2 & coupled rotations\\
5 & $r\le5$ & simplex, $r=5$ & 3 & precession / harmonic tower\\
$D$ & $r\le D$ & $r=D$ only on a nondegenerate stratum & $D-2$ & motion on $\Sph^{D-1}$ selected by the minimax problem\\
\bottomrule
\end{tabularx}
\end{center}

The certificate column anticipates the general theorem of Section~\ref{sec:active-geometry}; it does not mean that the Bellman state becomes finite-dimensional. The state remains a support function on the whole sphere. Only an instantaneous bottleneck certificate can be carried by finitely many suppliers.

Dimension two is exceptional because a rigid rotation on $\Sph^1$ can sweep every direction. In higher dimension the question becomes: how can the rotation planes themselves vary without losing exponential growth or accumulating directional deficit? This motivates precession, but no precessing family has yet been selected by the minimax problem.

A unit-speed curve on $\Sph^{D-1}$ has $D-2$ intrinsic spherical curvatures. Thus $D-1$ naturally decomposes into one radial growth rate and $D-2$ directional-curvature degrees of freedom. It must not be identified with the maximum number of historical delays, which comes from the active face rather than from kinematic coordinates.

\section{Dimension three: the first genuine precession}

For $D=3$, the directional motion lives on $\Sph^2$. A natural prototype makes latitude oscillate while azimuth rotates. Geometrically this produces a baseball-seam-type pattern on the sphere, followed by an exponentially dilated physical trajectory.

This prototype is only geometric intuition, not an optimality theorem. Section~\ref{sec:parity} proves that in odd dimension a fixed skew-symmetric generator has an invariant axis and cannot produce positive full-dimensional inradius. Precession is a natural way to vary rotation planes, not a family already proved optimal.

\subsection{From intuition to exact questions}

After $D=1$, $D=2$, and the general definition $\Gamma=\e^{\kappa\sigma}\omega$, the issue is no longer to guess another trajectory formula. One must understand what constraints any exponential orbit satisfies when its common service is weakest.

Three objects suffice:
\[
  P_\sigma(\tau)=\e^{-\kappa\tau}\omega(\sigma-\tau),
  \qquad
  K_\sigma=\conv\{P_\sigma(\tau):\tau\ge0\},
  \qquad
  \mu(\sigma)=\min_{\|n\|=1}h_{K_\sigma}(n).
\]
A bottleneck normal $n$ selects an exposed face of $K_\sigma$. The suppliers that actually carry this face are historical points of the trajectory, hence genuine ages $\tau_i$ rather than abstract variables. This observation gives the first finite compression of the active geometry.

The rest of the paper separates two levels. First come universal results valid without a shape ansatz: historical certificates, current contact at critical phases, chamber laws, tangency, pitch, coarea, and the parity obstruction. Explicit antipodal, harmonic, or precessing families are then tested against these identities.

\subsection{What is not assumed}

Nothing above proves a no-breathing regime, global stationarity, antipodality, or reduction to a primitive cycle. A cell supplied by the self-similar theorem may still contain switches, several directional cycles, and transient suppliers. The local equations below must therefore be read stratum by stratum.

The logical order is
\[
\begin{aligned}
  D=1&\longrightarrow D=2
  \longrightarrow \Gamma=\e^{\kappa\sigma}\omega\\
  &\longrightarrow (K_\sigma,h_\sigma,\mu)
  \longrightarrow \text{active geometry}.
\end{aligned}
\]
The exponential-oscillator definition is not a conjecture about the optimizer, and finite active certificates do not reduce the Bellman state to a few delays. The support remains a function on the whole sphere; only the instantaneous bottleneck has a finite witness.

\section{Active geometry and bottleneck dynamics}
\label{sec:active-geometry}

\subsection{Certificates by genuine historical suppliers}

For an exponential orbit, set
\[
  A_\sigma:=\{0\}\cup\{\e^{-\kappa\tau}\omega(\sigma-\tau):\tau\ge0\},
  \qquad
  K_\sigma=\conv A_\sigma.
\]
Adding the origin makes $A_\sigma$ compact, because suppliers of age $\tau$ converge to the origin.

The complete state is a support function on the whole sphere and is therefore infinite-dimensional. A worst direction, however, sees only one exposed face. The contact point of the centered inscribed ball lies in a face of dimension at most $D-1$, and Carathéodory's theorem reduces the certificate there to at most $D$ genuine historical suppliers.

\begin{theorem}[Historical certificate of rank at most $D$]
Suppose $\mu(\sigma)=\inrad(K_\sigma)>0$, and let $n$ minimize $h_{K_\sigma}$. Set
\[
  F_\sigma(n)=K_\sigma\cap\{x:\langle n,x\rangle=\mu(\sigma)\}.
\]
Then
\[
  \mu n\in\conv\bigl(A_\sigma\cap F_\sigma(n)\bigr).
\]
In particular, there exist genuine historical suppliers
\[
  P_0,\ldots,P_{r-1}\in A_\sigma\cap F_\sigma(n)
\]
and weights $\beta_i\ge0$, $\sum_i\beta_i=1$, such that
\[
  \mu n=\sum_{i=0}^{r-1}\beta_iP_i,
  \qquad r\le D.
\]
\end{theorem}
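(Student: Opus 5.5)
The plan has three steps: show that $\mu n$ is the nearest point of $K_\sigma$ to the origin, identify the exposed face $F_\sigma(n)$ as the convex hull of the points of $A_\sigma$ it contains, and apply Carathéodory inside the hyperplane carrying that face.

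\emph{Step 1: $\mu n$ is the nearest point and lies on $F_\sigma(n)$.} Since $0\in K_\sigma$ and $\mu>0$, the origin is interior, with $h_{K_\sigma}\ge\mu$ on the sphere. Hence the closed ball $\mu\bar B$ lies in $K_\sigma$, so $\mu n\in K_\sigma$. Because $h_{K_\sigma}(n)=\mu$, the point $\mu n$ also lies in the supporting hyperplane $\{\langle n,x\rangle=\mu\}$. Therefore $\mu n\in F_\sigma(n)$.

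\emph{Step 2: $F_\sigma(n)=\conv(A_\sigma\cap F_\sigma(n))$.} First, $A_\sigma$ is compact. It is the image of $[0,\infty]$ under $\tau\mapsto \e^{-\kappa\tau}\omega(\sigma-\tau)$, with the value $0$ at $\tau=\infty$, and this map is continuous because the discount factor tends to zero. In finite dimension the convex hull of a compact set is compact, so $K_\sigma=\conv A_\sigma$ is closed and no closure is needed.

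Take $x\in F_\sigma(n)$. By Carathéodory in $\R^D$, write $x=\sum_j\lambda_ja_j$ with $a_j\in A_\sigma$ and $\lambda_j>0$. Each $a_j$ satisfies $\langle n,a_j\rangle\le h_{K_\sigma}(n)=\mu$. A convex combination with positive weights attains $\mu$ only if every term equals $\mu$, so every $a_j$ lies in the hyperplane, hence in $A_\sigma\cap F_\sigma(n)$. The origin never appears among these $a_j$, since $\langle n,0\rangle=0<\mu$. Consequently every supplier used is a genuine historical point $P_\sigma(\tau)$, not the auxiliary origin. Applying this to $x=\mu n$ gives the first claim.

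\emph{Step 3: Carathéodory inside the face.} Now $\mu n$ lies in the convex hull of a subset of the affine hyperplane $\{\langle n,x\rangle=\mu\}$, which has dimension $D-1$. Carathéodory's theorem in dimension $D-1$ gives at most $(D-1)+1=D$ points $P_0,\dots,P_{r-1}$ of $A_\sigma\cap F_\sigma(n)$ and weights $\beta_i\ge0$ summing to $1$ with $\mu n=\sum\beta_iP_i$. This gives $r\le D$.

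The only delicate point is Step 2. The representation must use points of $A_\sigma$ itself and not limits of them. Compactness of $A_\sigma$ (obtained by adjoining $0$) handles this, and the exclusion of the origin from the face handles the word ``genuine''. The minimality of $n$ is used only to locate $\mu n$ on the face.
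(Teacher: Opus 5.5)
Your proof is correct and follows the paper's argument: Carathéodory in $\R^D$ applied to $\mu n\in K_\sigma$, then pairing with $n$ to force every positively weighted point onto the face, then Carathéodory again inside the $(D-1)$-dimensional hyperplane $\{\langle n,x\rangle=\mu\}$. You also make explicit two points the paper leaves implicit. First, compactness of $A_\sigma$ guarantees that $K_\sigma=\conv A_\sigma$ is closed, so $\mu B_D\subset K_\sigma$ and no limit points are needed. Second, $\langle n,0\rangle=0<\mu$ keeps the auxiliary origin out of the certificate, which justifies the word ``genuine''.
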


\begin{proof}
Since $\mu\BD\subset K_\sigma$, the point $\mu n$ belongs to $K_\sigma=\conv A_\sigma$. By Carathéodory's theorem in $\R^D$,
\[
  \mu n=\sum_j\lambda_j a_j,
  \qquad
  a_j\in A_\sigma,
  \quad \lambda_j\ge0,
  \quad \sum_j\lambda_j=1.
\]
Taking the scalar product with $n$ gives
\[
  \mu=\sum_j\lambda_j\langle n,a_j\rangle.
\]
Every term satisfies $\langle n,a_j\rangle\le h_{K_\sigma}(n)=\mu$. Hence every component of positive weight belongs to $A_\sigma\cap F_\sigma(n)$. These points lie in the affine hyperplane $\langle n,x\rangle=\mu$, of dimension at most $D-1$; Carathéodory's theorem in that affine space reduces the certificate to at most $D$ suppliers.
\end{proof}

The theorem does not say that memory is a list of $D$ ages. It says only that, at a fixed phase and for a fixed bottleneck normal, the bottleneck value has a finite witness.

\subsection{The current point at a globally critical phase}

The exact deadline inequality of~\cite[Proposition~12.5, equations~(12.11)--(12.12)]{koch-cell} gives more than mere membership of the current point in the hull. In the physical-time-normalized state
\[
  p=\frac{\gamma(t)}{t},
  \qquad
  H=\frac{K_t}{t},
\]
every trajectory of ratio at most $C$ satisfies
\[
  \langle u,p\rangle\ge1-(C-1)h_H(u).
\]
Since $p\in H$, one also has $\langle u,p\rangle\le h_H(u)$. At a worst phase, the deadline places the current point at least on the critical plane, while hull membership prevents it from crossing that plane; the point is therefore exactly on the bottleneck face.

\begin{proposition}[Current contact at the worst bottleneck]
Assume $\|\omega'\|=1$ and that $\inf_\sigma\mu(\sigma)$ is attained at $\sigma^*$. If $n$ is a bottleneck normal at that phase, then
\[
  \langle n,\omega(\sigma^*)\rangle=\mu(\sigma^*).
\]
Hence the current point belongs to the bottleneck face.
\end{proposition}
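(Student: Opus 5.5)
The plan is to transport the deadline inequality of~\cite[Proposition~12.5]{koch-cell}, quoted just above, into the exponential gauge and evaluate it at the worst phase with the bottleneck normal. The sharp value of $C$ then forces equality.

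\textbf{Step 1: change of gauge.} With $\|\omega'\|=1$, the dictionary of gauges gives $t=\Tphys(\sigma)=c^{-1}\e^{\kappa\sigma}$, where $c:=\kappa/\sqrt{1+\kappa^2}$. Since $\Gamma(\sigma)=\e^{\kappa\sigma}\omega(\sigma)$, the normalized current point and hull are
\[
  p=\frac{\gamma(t)}{t}=c\,\omega(\sigma),
  \qquad
  H=\frac{K_t}{t}=c\,K_\sigma,
  \qquad
  h_H(u)=c\,h_\sigma(u).
\]
Here the physical hull at time $t$, rescaled by $\e^{-\kappa\sigma}$, is exactly the discounted hull $K_\sigma$. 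The trajectory is a bilateral exponential orbit issued from the origin, so its whole past is included.

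\textbf{Step 2: the sharp ratio.} By the support--inradius formula (Proposition~7.1 of~\cite{koch-cell}), the ratio of the orbit is
\[
  C:=C_D[\kappa,\omega]=\sup_\sigma\frac{1}{c\,\mu(\sigma)}=\frac{1}{c\,\mu^*},
  \qquad
  \mu^*=\mu(\sigma^*)=\inf_\sigma\mu(\sigma).
\]
This uses that the infimum is attained at $\sigma^*$. We may assume $\mu^*>0$, since otherwise no bottleneck normal with positive level exists and the ratio is infinite. The trajectory therefore has ratio at most $C$, and the deadline inequality applies with this exact $C$.

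\textbf{Step 3: the two inequalities.} Take $u=n$, a minimizer of $h_{\sigma^*}$, so that $h_{\sigma^*}(n)=\mu^*$. The deadline inequality gives
\[
  c\langle n,\omega(\sigma^*)\rangle
  \ge 1-(C-1)\,c\,\mu^*
  = 1-\Bigl(\frac{1}{c\mu^*}-1\Bigr)c\mu^*
  = c\,\mu^*,
\]
so $\langle n,\omega(\sigma^*)\rangle\ge\mu^*$. Conversely, $\omega(\sigma^*)=P_{\sigma^*}(0)\in K_{\sigma^*}$, hence $\langle n,\omega(\sigma^*)\rangle\le h_{\sigma^*}(n)=\mu^*$. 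The two bounds together give equality, which places the current point in $F_{\sigma^*}(n)$.

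\textbf{Main obstacle.} The delicate point is Step 2. The deadline inequality must be invoked with $C$ equal to the exact supremum of the orbit, not merely some admissible bound; any slack would give only $\langle n,\omega\rangle\ge\mu^*-O(\text{slack})$. It must also be checked that $t\,h_H$ is the physical support function at the same instant, with no additive offset in $\Tphys$. The choice of $\Tphys$ without a constant term reflects the bilateral repetition, with negative copies recreating all small scales. I would verify this normalization explicitly before applying the inequality.
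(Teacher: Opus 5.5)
Your proposal is correct and follows essentially the same route as the paper: rescale by $\kappa/\sqrt{1+\kappa^2}$ so that $p=c\,\omega$ and $H=c\,K_\sigma$, use the exact ratio to get $\min h_H=1/C$ at the worst phase, and sandwich $\langle n,p\rangle$ between the deadline lower bound $1-(C-1)/C=1/C$ and the hull-membership upper bound $h_H(n)=1/C$. Your explicit restriction to $\mu^*>0$ (finite ratio) is the same hypothesis the paper uses implicitly, and it is genuinely needed: in the degenerate $\mu^*=0$ configurations of the parity obstruction, a bottleneck normal can satisfy $\langle n,\omega\rangle<0=\mu$.
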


\begin{proof}
Write
\[
  A_\kappa=\frac{\sqrt{1+\kappa^2}}{\kappa}.
\]
Then
\[
  \tphys(\sigma)=A_\kappa\e^{\kappa\sigma},
  \qquad
  p=A_\kappa^{-1}\omega,
  \qquad
  H=A_\kappa^{-1}K_\sigma.
\]
At a phase realizing the worst quotient, $\min h_H=1/C$. For a minimizing normal $n$, the deadline inequality gives $\langle n,p\rangle\ge1/C$, whereas $p\in H$ gives $\langle n,p\rangle\le h_H(n)=1/C$. Equality follows, and multiplication by $A_\kappa$ gives $\langle n,\omega\rangle=\mu$.
\end{proof}

\subsection{Regular active chambers and distance to a face}

Active suppliers may switch, merge, or lose affine independence. A regular chamber is an interval on which this combinatorics remains fixed, so the equations may be differentiated without hiding switches in smooth notation.

\begin{definition}[Regular active chamber]
A phase interval is a regular active chamber of rank $r$ if there exist $C^1$ functions
\[
  \tau_0\equiv0,
  \qquad \tau_i>0\ (i\ge1),
  \qquad n,\mu,
  \qquad \beta_i>0,
\]
such that
\[
  P_0=\omega(\sigma),
  \qquad
  P_i=\e^{-\kappa\tau_i}\omega(\sigma-\tau_i),
\]
the points $P_i$ are affinely independent, and
\[
  \langle n,P_i\rangle=\mu,
  \qquad
  \mu n=\sum_{i=0}^{r-1}\beta_iP_i,
  \qquad
  \sum_i\beta_i=1,
  \qquad
  \|n\|=1,
\]
with $\mu n\in\relint\conv(P_0,\ldots,P_{r-1})$. For $i\ge1$, the age contact is interior and nondegenerate. In addition, the candidate face is required to be globally supporting throughout the chamber:
\[
  \langle n,\e^{-\kappa\tau}\omega(\sigma-\tau)\rangle\le\mu
  \qquad\text{for every }\tau\ge0.
\]
Rank loss, boundary contacts, degenerate age maxima, and switches are excluded.
\end{definition}

For affinely independent points $P_0,\ldots,P_{r-1}$, set
\[
  V=(P_1-P_0\mid\cdots\mid P_{r-1}-P_0),
  \qquad
  G=V^\top V,
  \qquad
  c=V^\top P_0.
\]
The point of their affine span closest to the origin is
\[
  y=P_0-VG^{-1}c,
  \qquad
  \|y\|^2=\|P_0\|^2-c^\top G^{-1}c.
\]
If its barycentric coordinates are strictly positive, this is the distance to the active face. Otherwise the minimum is carried by a lower-rank stratum. The determinant formula for a full simplex is therefore only the generic case $r=D$.

\subsection{The bottleneck envelope law}

Without fresh injection, every old supplier contracts at rate $\kappa$, so the bottleneck would decay like $\e^{-\kappa\sigma}$. The current point is the only supplier that does not age in this way. The next law is the exact balance between aging and present injection; it is the differential analogue at the bottleneck of the Bellman update $h^+=\max(ah,h_c)$.

\begin{proposition}[Envelope law in a regular chamber]
Within a regular active chamber,
\[
  \mu'+\kappa\mu
  =\beta_0\,n\cdot(\kappa\omega+\omega').
\]
In logarithmic physical time $\ell=\log t_{\mathrm{phys}}$, with
\[
  \bar m=\frac{\kappa}{\sqrt{1+\kappa^2}}\mu,
  \qquad
  v=\frac{\kappa\omega+\omega'}{\sqrt{1+\kappa^2}},
\]
one has
\[
  \frac{\dd\bar m}{\dd\ell}+\bar m=\beta_0n\cdot v,
  \qquad \ell=\log t_{\mathrm{phys}}.
\]
Equivalently,
\[
  t_{\mathrm{phys}}\frac{\dd\bar m}{\dd t_{\mathrm{phys}}}+\bar m
  =\beta_0n\cdot v.
\]
\end{proposition}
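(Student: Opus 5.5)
The plan is to differentiate the defining identities of the regular chamber and use the barycentric representation $\mu n=\sum_i\beta_iP_i$ to eliminate the unknown derivatives. Start from $\mu=\langle n,\mu n\rangle=\sum_i\beta_i\langle n,P_i\rangle$. Since $\|n\|=1$ and $\mu n$ lies on the face, $\langle n',n\rangle=0$. Differentiate $\mu=\langle n,P_i\rangle$ for each $i$ to get
\[
  \mu'=\langle n',P_i\rangle+\langle n,P_i'\rangle .
\]
Multiply by $\beta_i$ and sum. Because $\sum_i\beta_i=1$ and $\sum_i\beta_iP_i=\mu n$, the $n'$ terms collapse to $\mu\langle n',n\rangle=0$. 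This leaves the envelope-type identity
\[
  \mu'=\sum_i\beta_i\langle n,P_i'\rangle .
\]
This is the standard envelope (Danskin) mechanism: variation of the normal is second order at a face.

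Next compute $P_i'$ along the chamber, where $\tau_i=\tau_i(\sigma)$. For $i\ge1$,
\[
  \frac{\dd}{\dd\sigma}\Bigl[\e^{-\kappa\tau_i}\omega(\sigma-\tau_i)\Bigr]
  =\e^{-\kappa\tau_i}\omega'(\sigma-\tau_i)
  +\tau_i'\,\partial_\tau\Bigl[\e^{-\kappa\tau}\omega(\sigma-\tau)\Bigr]_{\tau=\tau_i}.
\]
The second term is killed after pairing with $n$. The age contact is interior and the face is globally supporting, so $\tau\mapsto\langle n,\e^{-\kappa\tau}\omega(\sigma-\tau)\rangle$ has an interior maximum at $\tau_i$, and its $\tau$-derivative vanishes there (the supplier tangency). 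The first term equals $-\partial_\tau P$ plus a shift. Indeed, writing $P(\sigma,\tau)=\e^{-\kappa\tau}\omega(\sigma-\tau)$, one has
\[
  \partial_\sigma P=-\partial_\tau P-\kappa P .
\]
Using tangency again gives $\langle n,\partial_\sigma P_i\rangle=-\kappa\langle n,P_i\rangle=-\kappa\mu$ for $i\ge1$. For the current point $P_0=\omega(\sigma)$ with $\tau_0\equiv0$, there is no age freedom and $\langle n,P_0'\rangle=\langle n,\omega'\rangle$.

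Summing,
\[
  \mu'=\beta_0\langle n,\omega'\rangle-\kappa\mu\sum_{i\ge1}\beta_i
  =\beta_0\langle n,\omega'\rangle-\kappa\mu(1-\beta_0).
\]
Since $\langle n,\omega\rangle=\mu$ (the current point is on the face), $\beta_0\kappa\mu=\beta_0\kappa\langle n,\omega\rangle$. Rearranging gives $\mu'+\kappa\mu=\beta_0\,n\cdot(\kappa\omega+\omega')$.

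The gauge version is a change of variables. From $\tphys=A_\kappa\e^{\kappa\sigma}$ we get $\ell=\log A_\kappa+\kappa\sigma$, so $\dd/\dd\ell=\kappa^{-1}\dd/\dd\sigma$. With $\bar m=\kappa\mu/\sqrt{1+\kappa^2}$, dividing the $\sigma$-law by $\kappa$ and multiplying by $\kappa/\sqrt{1+\kappa^2}$ turns the right side into $\beta_0 n\cdot v$ and the left side into $\dd\bar m/\dd\ell+\bar m$. The last form is the definition of the dot.

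The main obstacle is justifying the differentiation. One needs $C^1$ regularity of $n,\mu,\beta_i,\tau_i$, which is assumed in the chamber definition, and one needs the vanishing of the age derivative. The latter requires each $\tau_i>0$ to be an interior maximizer of the global support inequality, which is exactly the nondegenerate interior-contact hypothesis. I would make this tangency step explicit first, since everything else is linear algebra.
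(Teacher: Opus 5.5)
Your proof is correct and matches the paper's: you average the differentiated contact identities $n\cdot P_i=\mu$ with the weights $\beta_i$ so the $n'$ terms vanish, you use supplier tangency to get $n\cdot P_i'=-\kappa\mu$ for every old contact independently of $\tau_i'$, and you close with the current-point contact $n\cdot\omega=\mu$ and $\dd\ell/\dd\sigma=\kappa$. Your identity $\partial_\sigma P=-\partial_\tau P-\kappa P$ simply writes out explicitly the step that the paper compresses into one line.
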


\begin{proof}
For an old supplier $i\ge1$, interior stationarity in age of
\[
  \tau\longmapsto \e^{-\kappa\tau}n\cdot\omega(\sigma-\tau)
\]
gives
\[
  n\cdot(\kappa\omega_i+\omega_i')=0.
\]
Differentiating $P_i=\e^{-\kappa\tau_i}\omega(\sigma-\tau_i)$, this tangency implies, independently of $\tau_i'$,
\[
  n\cdot P_i'=-\kappa\mu,
  \qquad i\ge1.
\]
For the current point, $P_0'=\omega'$. Differentiating $n\cdot P_i=\mu$ and averaging with the weights $\beta_i$ yields
\[
  \mu'
  =n'\cdot\sum_i\beta_iP_i+
    \sum_i\beta_i n\cdot P_i'
  =\sum_i\beta_i n\cdot P_i',
\]
because $\sum_i\beta_iP_i=\mu n$ and $n'\cdot n=0$. Therefore
\[
  \mu'=\beta_0n\cdot\omega'-(1-\beta_0)\kappa\mu.
\]
Since the current point is active, $n\cdot\omega=\mu$, which gives the formula. The conversion to $\ell$ uses $\dd\ell/\dd\sigma=\kappa$.
\end{proof}

In dimension two, for an active chord
\[
  p=\bar mn+s_-t,
  \qquad
  y=\bar mn+s_+t,
  \qquad
  L=s_+-s_-,
\]
the current weight is $\beta_0=s_+/L$. The general law becomes
\[
  \frac{\dd\bar m}{\dd\ell}
  =-\bar m+\frac{s_+}{L}(n\cdot v),
\]
which agrees with the planar two-contact chamber in~\cite[Section~C.3]{koch-cell}.

\begin{remark}[Boundary certificates]
Definition~8.3 already assumes $\beta_i>0$, so the former corollary was circular. The useful nontrivial statement concerns limiting boundary strata: if a smooth limiting certificate has $\beta_0=0$, the same envelope calculation gives $\mu'=-\kappa\mu<0$. Hence such a boundary certificate cannot carry a differentiable local minimum of the bottleneck.
\end{remark}

\subsection{Tangency, pitch, and age}

The tangency of every old supplier is
\[
  n\cdot(\kappa\omega_i+\omega_i')=0,
  \qquad i\ge1.
\]
Changing $\kappa$ moves old suppliers more than young ones. To first order, the total effect on the bottleneck is a barycentric mean of the active ages. Stationarity of the ratio fixes this mean exactly.

\begin{proposition}[Pitch identity]
Consider a regular chamber and vary $\kappa$ while keeping the directional shape $\omega$ fixed. If
\[
  C(\kappa)=\frac{\sqrt{1+\kappa^2}}{\kappa\mu(\kappa)}
\]
is stationary in $\kappa$, then, with $\tau_0=0$,
\[
  \sum_i\beta_i\tau_i=\frac{1}{\kappa(1+\kappa^2)}.
\]
The same identity holds under joint optimization of shape parameters whenever the usual envelope-theorem hypotheses are satisfied.
\end{proposition}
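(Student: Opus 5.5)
The plan is to run the same computation as in the proof of the envelope law (Proposition on the envelope law in a regular chamber). This time the differentiation parameter is the pitch $\kappa$ rather than the phase $\sigma$. The phase $\sigma$ is held at the worst phase and the directional shape $\omega$ is held fixed. Since $\omega$ does not depend on $\kappa$, the normalization $\|\omega'\|=1$, and hence the length factor $\sqrt{1+\kappa^2}/\kappa$, is unaffected. I will assume the regular-chamber data $(\tau_i(\kappa),n(\kappa),\mu(\kappa),\beta_i(\kappa))$ depend $C^1$ on $\kappa$ near the given pitch. This follows from the implicit function theorem applied to the defining equations of the chamber, using affine independence and nondegenerate age contacts.

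\emph{Step 1: derivative of $\mu$ in $\kappa$.} Differentiate the identities $\langle n,P_i\rangle=\mu$ in $\kappa$ and average them with the weights $\beta_i$. The term $\langle \partial_\kappa n,\sum_i\beta_iP_i\rangle=\mu\langle\partial_\kappa n,n\rangle$ vanishes because $\|n\|=1$. For an old supplier $P_i=\e^{-\kappa\tau_i}\omega(\sigma-\tau_i)$, the total $\kappa$-derivative is
\[
  \partial_\kappa P_i=-\tau_iP_i-\frac{\dd\tau_i}{\dd\kappa}\,\e^{-\kappa\tau_i}(\kappa\omega_i+\omega_i').
\]
The second term is annihilated by $n$, by the tangency $n\cdot(\kappa\omega_i+\omega_i')=0$. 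The current point $P_0=\omega(\sigma)$ does not depend on $\kappa$, which is consistent with $\tau_0=0$. Hence $n\cdot\partial_\kappa P_i=-\tau_i\mu$ for every $i$, and therefore
\[
  \frac{\dd\mu}{\dd\kappa}=-\mu\sum_i\beta_i\tau_i .
\]
This is the precise form of the heuristic that ``changing $\kappa$ moves old suppliers more than young ones''.

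\emph{Step 2: stationarity.} I will write $\log C(\kappa)=\tfrac12\log(1+\kappa^2)-\log\kappa-\log\mu(\kappa)$ and set its derivative to zero:
\[
  \frac{\kappa}{1+\kappa^2}-\frac1\kappa+\sum_i\beta_i\tau_i=0 .
\]
Since $\frac1\kappa-\frac{\kappa}{1+\kappa^2}=\frac{1}{\kappa(1+\kappa^2)}$, this gives the identity.

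\emph{Main obstacle: the sup over $\sigma$ and the min over $n$.} The ratio is $C[\kappa,\omega]=\sup_\sigma(\cdots)/\mu(\sigma)$, and $\mu$ is itself a minimum over normals. I would handle both by a Danskin-type argument. Assume the worst phase $\sigma^*(\kappa)$ is unique, lies in a regular chamber, and depends differentiably on $\kappa$. Then the $\sigma$-variation contributes $\partial_\sigma\mu\cdot\frac{\dd\sigma^*}{\dd\kappa}=0$ at an interior minimum of $\mu$ in $\sigma$. The variation of the minimizing normal is already absorbed by $n'\cdot n=0$ in Step 1. For the joint-optimization clause, the same computation applies: the partial derivative in $\kappa$ at fixed shape parameters equals the total derivative by the envelope theorem. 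This is exactly where the stated ``usual envelope-theorem hypotheses'' (uniqueness, regularity of the chamber at the optimum) enter.
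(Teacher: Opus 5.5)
Your proposal is correct and is essentially the paper's proof. Stationarity of $\log C$ gives $\mu_\kappa/\mu=-1/(\kappa(1+\kappa^2))$. Averaging the $\kappa$-differentiated contact identities with the weights $\beta_i$ then gives $\mu_\kappa=-\mu\sum_i\beta_i\tau_i$: the normal variation drops out because $n\cdot\partial_\kappa n=0$, and the $\dd\tau_i/\dd\kappa$ terms drop out by tangency. This is exactly what the paper compresses into ``the envelope theorem at the active foot.''

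One minor overstatement concerns your claim that $C^1$ dependence on $\kappa$ follows from affine independence and nondegenerate age contacts. That holds when $r=D$, because moving a supplier inside the supporting hyperplane does not rotate a facet normal to first order. For $r<D$, however, the age equations are coupled through the motion of $n$, so you need the full chamber Jacobian to be nonsingular. The paper likewise takes this regularity for granted here, and states it explicitly only in the $D=3$ delay system.
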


\begin{proof}
Stationarity of $\log C$ gives
\[
  0=\frac{\kappa}{1+\kappa^2}-\frac1\kappa-\frac{\mu_\kappa}{\mu},
  \qquad
  \frac{\mu_\kappa}{\mu}=-\frac{1}{\kappa(1+\kappa^2)}.
\]
At fixed shape, $\partial_\kappa P_i=-\tau_iP_i$. The envelope theorem at the active foot yields
\[
  \mu_\kappa
  =\sum_i\beta_i n\cdot(-\tau_iP_i)
  =-\mu\sum_i\beta_i\tau_i.
\]
Combining the two identities proves the claim.
\end{proof}

Every active supplier satisfies
\[
  \mu=n\cdot P_i\le\|P_i\|=\e^{-\kappa\tau_i},
  \qquad
  \tau_i\le\frac1\kappa\log\frac1\mu.
\]
Sliding memory supplies a universal horizon, while this inequality gives an additional age bound directly in the exponential gauge.

\subsection{No breathing and just-in-time saturation}

The ratio requires only a lower floor for $\mu$; it does not prevent $\mu$ from breathing above its minimum. The no-breathing regime is the stronger equality in which no common reserve is stored and later consumed.

For an arbitrary exponential orbit,
\[
  C=\sup_\sigma\frac{\sqrt{1+\kappa^2}}{\kappa\mu(\sigma)}
  =\frac{\sqrt{1+\kappa^2}}{\kappa\inf_\sigma\mu(\sigma)}.
\]
Hence
\[
  \inf_\sigma\mu(\sigma)=\frac{\sqrt{1+\kappa^2}}{\kappa C},
\]
but nothing forces $\mu$ to be constant. We call
\[
  \mu(\sigma)\equiv\frac{\sqrt{1+\kappa^2}}{\kappa C}
\]
the no-breathing regime at ratio $C$. Since
\[
  m_{\mathrm{phys}}(\sigma)=\e^{\kappa\sigma}\mu(\sigma),
  \qquad
  \tphys(\sigma)=\frac{\sqrt{1+\kappa^2}}{\kappa}\e^{\kappa\sigma},
\]
one has exactly
\[
  \text{no breathing at ratio }C
  \quad\Longleftrightarrow\quad
  m_{\mathrm{phys}}(t)-\frac{t}{C}\equiv0.
\]
This is the precise bridge to just-in-time Bellman saturation. Nothing here proves that every optimal orbit must be no-breathing.

\subsection{Frenet frame and a regular delay system in dimension three}

In dimension three, a unit-speed curve $\omega$ on $\Sph^2$ has the frame
\[
  e_0=\omega,
  \qquad
  e_1=\omega',
  \qquad
  e_2=e_0\times e_1,
\]
with
\[
  e_0'=e_1,
  \qquad
  e_1'=-e_0+\kappa_g e_2,
  \qquad
  e_2'=-\kappa_g e_1.
\]
This representation converts static KKT conditions into supplier-delay equations. It remains local: rank loss, boundary contacts, and switches require gluing across strata.

\begin{proposition}[Regular implicit delay systemin $D=3$]
Consider a regular chamber of rank three and set
\[
  V_1=P_1-P_0,
  \qquad
  V_2=P_2-P_0,
  \qquad
  N=V_1\times V_2,
  \qquad
  n=\frac{N}{\|N\|},
\]
with a fixed orientation in the chamber. Then
\[
  n'
  =\frac{(I-nn^\top)\bigl(V_1'\times V_2+V_1\times V_2'\bigr)}{\|N\|}.
\]
For an old contact $i\in\{1,2\}$, set
\[
  \omega_i=\omega(\sigma-\tau_i),
  \qquad
  t_i=\omega'(\sigma-\tau_i),
  \qquad
  q_i=\omega_i\times t_i,
  \qquad
  g_i=\kappa_g(\sigma-\tau_i).
\]
If the age maximum is nondegenerate, then away from switches,
\[
  1-\tau_i'
  =-\frac{n'\cdot(\kappa\omega_i+t_i)}{-(1+\kappa^2)\mu\e^{\kappa\tau_i}+g_i(n\cdot q_i)}.
\]
The denominator equals $\e^{\kappa\tau_i}\partial_{\tau\tau}f_i(\tau_i)$ for
\[
  f_i(\tau)=\e^{-\kappa\tau}n\cdot\omega(\sigma-\tau).
\]
After substituting $P_i'$ into the formula for $n'$, these relations form a finite linear system for $(\tau_1',\tau_2')$. They determine the delay derivatives whenever the corresponding chamber Jacobian is nonsingular; the displayed equations are therefore implicit rather than already solved in closed form.
\end{proposition}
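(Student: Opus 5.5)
The plan is to prove the three claims of the proposition in turn: the formula for $n'$, the delay formula for $1-\tau_i'$ (including the identification of its denominator), and finally the assembly into a linear system. Throughout, the only inputs are the regular-chamber hypotheses of Definition~8.3 and the Frenet relations $e_0'=e_1$, $e_1'=-e_0+\kappa_g e_2$, $e_2'=-\kappa_g e_1$.

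\textbf{The formula for $n'$.} Differentiate the normalization $n=N/\|N\|$. By the quotient rule,
\[
n'=\frac{N'}{\|N\|}-\frac{N\,(N\cdot N')}{\|N\|^3}=\frac{(I-nn^\top)N'}{\|N\|}.
\]
The bilinear product rule gives $N'=V_1'\times V_2+V_1\times V_2'$, which is the displayed formula. The orientation is fixed and $\|N\|>0$ because the suppliers stay affinely independent in the chamber. I will also note, for consistency, that $n\cdot V_j=0$ matches $\langle n,P_i\rangle=\mu$ for all $i$.

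\textbf{The delay formula.} For an old contact, the chamber assumes an interior nondegenerate age maximum of $f_i$ at $\tau_i$. Writing $f_i(\tau)=\e^{-\kappa\tau}n\cdot\omega(\sigma-\tau)$, this means
\[
\partial_\tau f_i=-\e^{-\kappa\tau}\,n\cdot(\kappa\omega+\omega')(\sigma-\tau)=0.
\]
The identity $\Phi_i(\sigma):=n(\sigma)\cdot(\kappa\omega_i+t_i)=0$ therefore holds along the chamber, and I differentiate it in $\sigma$. Since $\omega_i,t_i$ depend on $\sigma-\tau_i$, the chain rule gives
\[
n'\cdot(\kappa\omega_i+t_i)+(1-\tau_i')\,n\cdot(\kappa t_i+t_i')=0.
\]
The Frenet frame at the shifted phase gives $t_i'=-\omega_i+g_iq_i$, with $q_i=e_2=\omega_i\times t_i$. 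The coefficient is then $\kappa\,n\cdot t_i-n\cdot\omega_i+g_i\,n\cdot q_i$.

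To simplify it, I use the tangency $n\cdot t_i=-\kappa\,n\cdot\omega_i$ and the contact $n\cdot\omega_i=\mu\e^{\kappa\tau_i}$. Together they give $\kappa\,n\cdot t_i-n\cdot\omega_i=-(1+\kappa^2)\mu\e^{\kappa\tau_i}$, so the coefficient equals the stated denominator. Solving for $1-\tau_i'$ yields the displayed formula.

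For the denominator identification, I compute $\partial_{\tau\tau}f_i$ directly:
\[
\partial_{\tau\tau}f_i=\e^{-\kappa\tau}\bigl(\kappa^2 n\cdot\omega+2\kappa\,n\cdot\omega'+n\cdot\omega''\bigr).
\]
At $\tau_i$ I substitute the first-order condition. The result is $\e^{-\kappa\tau_i}\bigl(\kappa\,n\cdot t_i+n\cdot t_i'\bigr)$, using $\kappa^2 n\cdot\omega+\kappa\,n\cdot t=0$. This is exactly $\e^{-\kappa\tau_i}$ times the coefficient above. Nondegeneracy makes this second derivative nonzero (negative), which justifies the division.

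\textbf{Assembly into a linear system.} I substitute $P_0'=\omega'$ and
\[
P_i'=\e^{-\kappa\tau_i}\bigl[(1-\tau_i')t_i+\kappa\tau_i'\omega_i\bigr]
\]
into $V_j'=P_j'-P_0'$. Each $V_j'$ is affine in $\tau_j'$, so $n'$ is affine in $(\tau_1',\tau_2')$. The two delay equations, rewritten as
\[
(1-\tau_i')\,D_i+n'\cdot(\kappa\omega_i+t_i)=0,
\]
with $D_i$ the denominator, are then two affine equations in $(\tau_1',\tau_2')$. They are uniquely solvable exactly when the $2\times2$ chamber Jacobian is nonsingular, which gives the implicit character asserted in the proposition.

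The main obstacle is keeping the phase-shift chain rule straight, namely the factor $(1-\tau_i')$ versus terms in $\tau_i'$ alone, and using the first-order condition consistently in both the delay formula and the second derivative. The determinant condition itself I will state as a hypothesis, not verify.
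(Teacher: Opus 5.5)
Your proposal is correct and follows the paper's proof essentially step for step: the quotient rule for $n'$, differentiation of the tangency $n\cdot(\kappa\omega_i+t_i)=0$ with the chain-rule factor $(1-\tau_i')$, the Frenet relation $t_i'=-\omega_i+g_iq_i$, simplification via the contact and tangency identities, and a direct check of $\partial_{\tau\tau}f_i$, which you write out more explicitly than the paper does. There is one sign slip in the assembly step: since $\frac{\dd}{\dd\sigma}\e^{-\kappa\tau_i}=-\kappa\tau_i'\e^{-\kappa\tau_i}$, the correct derivative is $P_i'=\e^{-\kappa\tau_i}\bigl[(1-\tau_i')t_i-\kappa\tau_i'\omega_i\bigr]$; with your $+$ sign, $n\cdot P_i'$ would depend on $\tau_i'$ instead of equalling $-\kappa\mu$ as in the envelope-law computation, so the slip does not affect the affine structure of the system, but it would change the Jacobian you actually write down.
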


\begin{proof}
The formula for $n'$ is the derivative of the normalized vector $N/\|N\|$; the projection $I-nn^\top$ removes the forbidden radial component. The tangency of an old contact is
\[
  n\cdot(\kappa\omega_i+t_i)=0.
\]
Differentiating with respect to $\sigma$ and using
\[
  \frac{\dd t_i}{\dd(\sigma-\tau_i)}=-\omega_i+g_iq_i
\]
gives
\[
  0=n'\cdot(\kappa\omega_i+t_i)
    +(1-\tau_i')n\cdot(-\omega_i+\kappa t_i+g_iq_i).
\]
Tangency also gives
\[
  n\cdot\omega_i=\mu\e^{\kappa\tau_i},
  \qquad
  n\cdot t_i=-\kappa\mu\e^{\kappa\tau_i}.
\]
Thus the second factor is
\[
  -(1+\kappa^2)\mu\e^{\kappa\tau_i}+g_i(n\cdot q_i),
\]
which proves the formula. Direct calculation of $\partial_{\tau\tau}f_i$ gives the same denominator.
\end{proof}

\subsection{Stratified KKT conditions and global support}

Contact equations alone are insufficient: they may describe a local face that is not actually extreme. The global inequality saying that no other age crosses the same supporting plane is indispensable.

In a regular chamber of rank $r\le D$,
\[
  \tau_0=0,
  \qquad
  P_0=\omega(\sigma),
  \qquad
  P_i=\e^{-\kappa\tau_i}\omega(\sigma-\tau_i),
\]
\[
  \|n\|=1,
  \qquad
  n\cdot P_i=\mu,
  \qquad
  \mu n=\sum_{i=0}^{r-1}\beta_iP_i,
  \qquad
  \beta_i>0,
  \qquad
  \sum_i\beta_i=1,
\]
and, for old contacts,
\[
  n\cdot(\kappa\omega_i+\omega_i')=0.
\]
These equalities must be accompanied by
\[
  n\cdot\bigl(\e^{-\kappa\tau}\omega(\sigma-\tau)\bigr)\le\mu
  \qquad\text{for all }\tau\ge0.
\]
This inequality is now part of Definition~8.3 rather than an afterthought; it is repeated here because it is essential in numerical certification. The KKT conditions are necessary on the chosen stratum. They do not certify orbit closure, equalization of worst phases, or global optimality.

\subsection{Projected coarea in arbitrary dimension}

Each displacement has a finite total budget of projected variation. Integration over all normals separates this budget into common level, directional stock, and a variation defect; the dimension enters only through sphere areas.

\begin{proposition}[Projected coarea identity]
Assume $D\ge2$. Let $\omega_k=|\Sph^k|$. For a rectifiable curve of length $L$ and convex hull $K$,
\[
  \int_{\Sph^{D-1}}\Var(u\cdot\gamma)\,\dd u
  =\frac{2\omega_{D-2}}{D-1}L.
\]
If $m=\inrad(K)$ and
\[
  S=\int_{\Sph^{D-1}}(h_K(u)-m)\,\dd u,
  \qquad
  E=\frac{\omega_{D-2}}{D-1}L
    -\int_{\Sph^{D-1}}h_K(u)\,\dd u,
\]
then $S,E\ge0$ and
\[
  \frac{\omega_{D-2}}{D-1}L=\omega_{D-1}m+S+E.
\]
In particular,
\[
  \frac{L}{m}\ge\frac{(D-1)\omega_{D-1}}{\omega_{D-2}}.
\]
\end{proposition}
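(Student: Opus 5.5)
The plan is to prove the first identity from the Cauchy–Crofton mechanism. For a rectifiable curve, $\Var(u\cdot\gamma)=\int_0^L|u\cdot\gamma'(s)|\,\dd s$ with $\|\gamma'\|=1$ a.e. Fubini–Tonelli then gives
\[
  \int_{\Sph^{D-1}}\Var(u\cdot\gamma)\,\dd u
  =\int_0^L\int_{\Sph^{D-1}}|u\cdot\gamma'(s)|\,\dd u\,\dd s.
\]
By rotation invariance, the inner integral is a constant $c_D=\int_{\Sph^{D-1}}|u_1|\,\dd u$. I would compute this constant in the slice coordinates $u=(\cos\theta,\sin\theta\,\xi)$ with $\xi\in\Sph^{D-2}$, where $\dd u=\sin^{D-2}\theta\,\dd\theta\,\dd\xi$. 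This gives
\[
  c_D=\omega_{D-2}\int_0^\pi|\cos\theta|\sin^{D-2}\theta\,\dd\theta
  =2\omega_{D-2}\Bigl[\tfrac{\sin^{D-1}\theta}{D-1}\Bigr]_0^{\pi/2}
  =\frac{2\omega_{D-2}}{D-1}.
\]
This is where $D\ge2$ is used. Multiplying by $L$ proves the first identity.

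Next I would compare the variation with the support function. The curve starts at $\gamma(0)=0$, and the projection $u\cdot\gamma$ must travel from $0$ up to its maximum $h_K(u)$ and from $0$ down to its minimum $-h_K(-u)$. Hence $\Var(u\cdot\gamma)\ge h_K(u)+h_K(-u)$, the width of $K$ in direction $u$. Integrating, and using the symmetry $u\mapsto -u$ of the measure,
\[
  \frac{2\omega_{D-2}}{D-1}L\ge 2\int_{\Sph^{D-1}}h_K(u)\,\dd u ,
\]
which is exactly $E\ge0$. The inequality $S\ge0$ is immediate from $h_K(u)\ge\min h_K=m$, with $m=\inrad(K)$ as in the support–inradius formula of Section~2. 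I read $\inrad$ here as the minimum of the support function, consistent with the paper's usage, so that $m\ge0$ because $0\in K$.

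The decomposition $\frac{\omega_{D-2}}{D-1}L=\omega_{D-1}m+S+E$ is then just the definitions of $S$ and $E$ added together: $\int h_K=\omega_{D-1}m+S$. Dropping $S,E\ge0$ and dividing by $m$ gives the final bound when $m>0$; when $m=0$ the bound is vacuous.

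The only step needing care is the width inequality. It depends on the curve starting at the origin, or more generally on $0\in K$, and without that the lower bound by $h_K(u)+h_K(-u)$ still holds but the sign of $h_K$ would matter. A second point is checking that Tonelli applies: the integrand is nonnegative and measurable since $\gamma$ is Lipschitz in arclength, so this is routine. Everything else is bookkeeping.
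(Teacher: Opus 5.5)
Your proof is correct and follows the paper's route: the projection (Cauchy--Crofton) formula gives the first identity, the width bound $\Var(u\cdot\gamma)\ge h_K(u)+h_K(-u)$ integrated with the symmetry $u\mapsto -u$ gives $E\ge0$, and the decomposition follows by adding and subtracting $\omega_{D-1}m$. The only difference is that you derive the projection formula explicitly, via Tonelli and the computation $\int_{\Sph^{D-1}}|u_1|\,\dd u=2\omega_{D-2}/(D-1)$, whereas the paper simply cites it.
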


\begin{proof}
The integral-geometric projection formula gives the first identity. For each $u$,
\[
  \Var(u\cdot\gamma)\ge h_K(u)+h_K(-u).
\]
After integration and symmetry,
\[
  \int_{\Sph^{D-1}}h_K(u)\,\dd u
  \le\frac{\omega_{D-2}}{D-1}L.
\]
The decomposition follows by adding and subtracting $\omega_{D-1}m$.
\end{proof}

The term $S$ is exactly the integrated support above its minimum. The term $E$ is defined as a variation defect; interpreting it as multiplicity or retraversal is geometric intuition, not its definition.

\section{Parity and the need for genuine precession}
\label{sec:parity}

The obstruction below concerns only directional motions generated by one fixed skew-symmetric matrix. It is not a general obstruction to odd dimensions.

A rigid rotation in odd dimension always has an invariant axis. The trajectory therefore cannot push support strictly to both sides of that axis; one direction retains zero support.

\begin{proposition}[Odd-dimensional obstruction for a fixed generator]
Let $D=2m+1$ and
\[
  \omega(\sigma)=\e^{\sigma A}\omega_0,
  \qquad A^\top=-A.
\]
For every $\kappa>0$, the exponentially normalized hull
\[
  K=\conv\Bigl(\{0\}\cup\{\e^{-\kappa\tau}\omega(\sigma-\tau):\tau\ge0\}\Bigr)
\]
has zero centered inradius. Hence a full $D$-dimensional exponential orbit cannot be generated by a single fixed skew-symmetric matrix.
\end{proposition}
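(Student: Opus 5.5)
Since $A$ is a real skew-symmetric matrix of odd size $D=2m+1$, we have $\det A=\det A^\top=\det(-A)=(-1)^{D}\det A=-\det A$. Hence $\det A=0$, and there is a unit vector $e$ with $Ae=0$. The plan is to show that $h_K(e)=0$ or $h_K(-e)=0$. Either one forces $\inrad(K)=\min_{\|n\|=1}h_K(n)\le0$. Since $0\in K$, every support value is nonnegative, so the centered inradius is exactly zero.

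\emph{Invariance of the axis component.} For every $\sigma$, $\langle e,\e^{\sigma A}\omega_0\rangle=\langle \e^{-\sigma A}e,\omega_0\rangle$, because $(\e^{\sigma A})^\top=\e^{-\sigma A}$. Since $Ae=0$, we have $\e^{-\sigma A}e=e$. So $\langle e,\omega(\sigma')\rangle=\langle e,\omega_0\rangle=:c$ is constant in $\sigma'$. The whole directional orbit lies on one slice $\{\langle e,x\rangle=c\}$ of the sphere.

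\emph{Conclusion.} Every generator of $K$ is either $0$ or $\e^{-\kappa\tau}\omega(\sigma-\tau)$, and the latter has $e$-component $\e^{-\kappa\tau}c$, which carries the sign of $c$. If $c\le0$, then $h_K(e)=\max\{0,\sup_\tau \e^{-\kappa\tau}c\}=0$. If $c>0$, then $h_K(-e)=0$ by the same computation. In both cases some unit normal has zero support, so the centered inradius vanishes for every $\kappa>0$ and every phase $\sigma$.

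The final assertion then follows. By the support--inradius formula of~\cite[Proposition~7.1]{koch-cell}, $m(t)=0$ for all $t$, so the ratio is infinite and no finite-ratio full-dimensional orbit arises.

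There is no real obstacle here. The only point needing care is to use the \emph{centered} inradius, which is the paper's $\inrad$, and the origin's membership in $A_\sigma$. Together these guarantee $h_K\ge0$, so the bound $\min h_K\le0$ is an equality. The argument does not even need the full kernel of $A$: one kernel vector suffices, and the odd dimension is used only to produce it.
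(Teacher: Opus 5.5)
Your proof is correct and is essentially the paper's argument: a kernel vector $z$ of $A$ makes $z\cdot\omega(\sigma)\equiv c$ constant, and the normal $-\sgn(c)\,z/\|z\|$ (either sign when $c=0$) has zero support because $0\in K$; your determinant justification of the kernel and your merging of $c=0$ into $c\le 0$ are only cosmetic differences. For the final assertion, citing the support--inradius formula is slightly awkward, because its supremum runs over the empty set $\{t:m(t)>0\}$. It is cleaner to note that your normal $n$ does not depend on $\sigma$, so the hyperplanes $\langle n,x\rangle=d$ with $d>0$ are never reached.
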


\begin{proof}
Every real skew-symmetric matrix of odd size has nontrivial kernel. Choose $0\neq z\in\ker A$. Then
\[
  z\cdot\e^{\sigma A}\omega_0=z\cdot\omega_0=:c.
\]
A supplier of age $\tau$ satisfies $z\cdot P(\tau)=c\e^{-\kappa\tau}$. If $c=0$, the entire orbit lies in $z^\perp$. If $c\neq0$, choose
\[
  n=-\sgn(c)\frac{z}{\|z\|}.
\]
Then $n\cdot P(\tau)\le0$ for every $\tau$, and $0\in K$, so $h_K(n)=0$. In both cases $\inrad(K)=0$.
\end{proof}

The conclusion is precise: in odd dimension, a full candidate must vary its rotation planes, amplitudes, or clocks. Precession is one natural way to do so, but the proposition selects no particular family.

\section{The antipodal subclass and the harmonic tower}

The antipodal subclass is an exact laboratory in which the two-sided mechanism of $D=1$ reappears fiber by fiber. Its even-dimensional part meets the symmetric moment curve and the Barvinok--Novik orbitopes.

\subsection{Antipodality, two half-cells, and an exact functional}

Assume
\[
  \omega(\sigma+T)=-\omega(\sigma),
  \qquad
  \rho=\e^{\kappa T}.
\]
Here $T$ is a directional half-period; the full period is $2T$ and its scale factor is $Q=\rho^2$. Define
\[
  C_\sigma
  =\conv\Bigl(\{0\}\cup
  \{\e^{-\kappa r}\omega(\sigma-r):0\le r\le T\}\Bigr).
\]
Writing an age as $jT+s$ immediately separates parity: even generations fall into the current half-cell, while odd generations fall into the opposite half-cell scaled by $\rho^{-1}$. The entire infinite memory is therefore contained in two blocks.

\begin{proposition}[Exact reduction to two half-cells]
The normalized hull of the entire past is
\[
  K_\sigma=\conv\bigl(C_\sigma\cup(-\rho^{-1}C_\sigma)\bigr).
\]
Consequently,
\[
  h_{K_\sigma}(n)
  =\max\{h_{C_\sigma}(n),\rho^{-1}h_{C_\sigma}(-n)\}.
\]
\end{proposition}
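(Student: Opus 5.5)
The plan is to decompose the set of suppliers by generation. Every age $\tau\ge0$ can be written uniquely as $\tau=jT+s$ with $j\in\mathbb{N}$ and $s\in[0,T)$. By antipodality, $\omega(\sigma-jT-s)=(-1)^j\omega(\sigma-s)$, so the supplier of age $\tau$ is
\[
  \e^{-\kappa\tau}\omega(\sigma-\tau)=(-1)^j\rho^{-j}\,\e^{-\kappa s}\omega(\sigma-s).
\]
The set of generation-$j$ suppliers is therefore exactly $(-\rho^{-1})^j$ times the generation-$0$ arc $\{\e^{-\kappa s}\omega(\sigma-s):0\le s<T\}$. Taking the closure to include $s=T$ costs nothing: the point $s=T$ is itself the generation-$1$ supplier with $s=0$, and it lies in $-\rho^{-1}C_\sigma$ anyway. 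Hence $A_\sigma=\{0\}\cup\bigcup_{j\ge0}(-\rho^{-1})^j\Lambda$, where $\Lambda$ is the closed arc, and $C_\sigma=\conv(\{0\}\cup\Lambda)$.

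The inclusion $\supseteq$ is immediate. The generation-$0$ and generation-$1$ suppliers, together with $0$, lie in $A_\sigma$. Their convex hulls are $C_\sigma$ and $-\rho^{-1}C_\sigma$, and both sit inside $K_\sigma$.

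For $\subseteq$, the key step is to show that every generation $j\ge2$ is already contained in $\conv(C_\sigma\cup(-\rho^{-1}C_\sigma))$. Since $0\in C_\sigma$ and $C_\sigma$ is convex, $\lambda C_\sigma\subseteq C_\sigma$ for every $\lambda\in[0,1]$. Even generations give $\rho^{-2i}\Lambda\subseteq\rho^{-2i}C_\sigma\subseteq C_\sigma$, using $\rho>1$. Odd generations give $-\rho^{-2i-1}\Lambda\subseteq\rho^{-2i}(-\rho^{-1}C_\sigma)\subseteq-\rho^{-1}C_\sigma$, by the same star-shapedness about $0$ applied to the convex set $-\rho^{-1}C_\sigma$, which also contains $0$. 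So $A_\sigma$ lies in the union, and taking convex hulls gives the equality. I expect no real obstacle. The only points needing care are:
\begin{itemize}
\item bookkeeping of the endpoint $s=T$;
\item the fact that $0$ belongs to both blocks;
\item the possibility that $K_\sigma$ is defined with or without closure, since $A_\sigma$ is compact, so both readings agree.
\end{itemize}

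The support-function formula then follows from two standard rules: $h_{\conv(X\cup Y)}=\max(h_X,h_Y)$, and for $\lambda>0$ one has $h_{-\lambda C}(n)=\sup_{x\in C}\langle n,-\lambda x\rangle=\lambda h_C(-n)$. With $\lambda=\rho^{-1}$ this gives the stated identity.
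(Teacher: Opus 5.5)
Your proposal is correct and follows the paper's proof. Both arguments use the generation decomposition $\tau=jT+s$ and antipodality to get the factor $(-\rho^{-1})^j$, absorb even generations into $C_\sigma$ and odd ones into $-\rho^{-1}C_\sigma$ using $0\in C_\sigma$ and $\rho^{-1}<1$, and note that the two most recent half-cells give the reverse inclusion; your handling of the endpoint $s=T$ and your explicit derivation of the support-function formula fill in details the paper leaves implicit.
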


\begin{proof}
Write $r=jT+s$ with $j\ge0$ and $0\le s<T$. Antipodality gives
\[
  \e^{-\kappa r}\omega(\sigma-r)
  =\rho^{-j}(-1)^j\e^{-\kappa s}\omega(\sigma-s).
\]
If $j$ is even, this point belongs to $C_\sigma$, because $0\in C_\sigma$ and $\rho^{-j}\le1$. If $j$ is odd, it belongs to $-\rho^{-1}C_\sigma$. The two most recent half-cells are present, proving equality.
\end{proof}

Set
\[
  \ell_\sigma
  =\int_0^T \e^{-\kappa r}
   \sqrt{\kappa^2+\|\omega'(\sigma-r)\|^2}\,\dd r
\]
and
\[
  \mu_\sigma
  =\min_{\|n\|=1}
   \max\{h_{C_\sigma}(n),\rho^{-1}h_{C_\sigma}(-n)\}.
\]
The old half-cell lengths form the geometric series $\rho\ell_\sigma/(\rho-1)$, hence
\[
  C_{\mathrm{anti}}[\kappa,\omega]
  =\sup_\sigma\frac{\rho}{\rho-1}\frac{\ell_\sigma}{\mu_\sigma}.
\]

\subsection{The harmonic tower and normalized projection}

For $D=2m$, consider
\[
\begin{split}
  w_{2m}(s)=\bigl(&\cos s,\sin s,
  a_2\cos3s,a_2\sin3s,\ldots,\\
  &a_m\cos((2m-1)s),a_m\sin((2m-1)s)\bigr),
\end{split}
\]
with $a_j>0$, and set $\omega_{2m}=w_{2m}/\|w_{2m}\|$. The norm is constant. When all weights equal one, this is the symmetric moment curve of Barvinok--Novik; positive weights correspond to an invertible linear transformation.

For $D=2m+1$, introduce
\[
  w_{2m+1}(s)=\bigl(w_{2m}(s),b\sin((2m+1)s)\bigr),
  \qquad b>0,
\]
and normalize. The odd lift preserves the even tower: projection onto the first coordinates, followed by normalization, recovers the preceding motion exactly. The new mode is the smallest absent positive odd frequency, hence the first one preserving antipodality without duplicating an existing coordinate.

\begin{proposition}[Normalized projection and the first available mode]
Let $\pi_{2m}:\R^{2m+1}\to\R^{2m}$ be projection onto the first coordinates. For every $s$,
\[
  \frac{\pi_{2m}\omega_{2m+1}(s)}{\|\pi_{2m}\omega_{2m+1}(s)\|}
  =\omega_{2m}(s).
\]
The new coordinate of frequency $2m+1$ is the smallest positive odd frequency not in $\{1,3,\ldots,2m-1\}$. In particular, the first lift $2\to3$ introduces mode $3$.
\end{proposition}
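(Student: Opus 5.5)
The plan is to compute directly from the definitions. Write $w_{2m+1}(s)=(w_{2m}(s),\,b\sin((2m+1)s))$. Since $\pi_{2m}$ is linear and simply discards the last coordinate, we have
\[
  \pi_{2m}\omega_{2m+1}(s)=\frac{\pi_{2m}w_{2m+1}(s)}{\|w_{2m+1}(s)\|}=\frac{w_{2m}(s)}{\|w_{2m+1}(s)\|}.
\]
This is a positive scalar multiple of $w_{2m}(s)$.

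To normalize it, we need $w_{2m}(s)\neq0$. By the paper's observation that the norm is constant, $\|w_{2m}(s)\|^2=1+\sum_{j=2}^m a_j^2>0$; we would recall this as the identity $\cos^2+\sin^2=1$ applied to each pair. Normalizing a positive multiple of a nonzero vector returns its unit direction, so the left-hand side of the statement equals $w_{2m}(s)/\|w_{2m}(s)\|=\omega_{2m}(s)$ for every $s$. The same identity shows that $\|\pi_{2m}\omega_{2m+1}(s)\|>0$, so the quotient is always defined. This disposes of the first claim.

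For the frequency claim, the even tower uses exactly the odd frequencies $1,3,\ldots,2m-1$. These are the $m$ smallest positive odd integers, so the smallest positive odd integer not among them is $2m+1$. Setting $m=1$ gives the lift $2\to3$ with mode $3$.

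To justify the word ``available'', we would also check two facts. First, $\sin((2m+1)(s+\pi))=-\sin((2m+1)s)$ because $2m+1$ is odd, so the lifted curve satisfies $w(s+\pi)=-w(s)$ and antipodality is preserved. Second, an even frequency $2k$ would give $\sin(2k(s+\pi))=\sin(2ks)$, which breaks antipodality. Reusing an odd frequency already in the list would only repeat an existing coordinate, not add a new one.

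No step is a genuine obstacle. The only point needing care is the nonvanishing of $\|\pi_{2m}w_{2m+1}\|$, and that follows from the constant-norm identity.
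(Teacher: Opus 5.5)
Your proof is correct and follows the paper's argument: the projection of $\omega_{2m+1}$ is the positive multiple $w_{2m}/\|w_{2m+1}\|$ of $w_{2m}$, normalizing removes that factor to give $\omega_{2m}$, and the frequency claim reduces to noting that $1,3,\ldots,2m-1$ are the first $m$ positive odd integers, with antipodality $f(s+\pi)=-f(s)$ singling out odd modes. Your explicit check that $\|w_{2m}(s)\|^2=1+\sum_{j\ge2}a_j^2>0$, which makes the quotient well defined, is a point the paper leaves implicit.
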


\begin{proof}
The projection of $\omega_{2m+1}=w_{2m+1}/\|w_{2m+1}\|$ is $w_{2m}/\|w_{2m+1}\|$. Normalizing removes that common factor and gives $w_{2m}/\|w_{2m}\|=\omega_{2m}$. Odd frequencies are exactly those satisfying $f(s+\pi)=-f(s)$; after $1,3,\ldots,2m-1$, the first new mode is $2m+1$.
\end{proof}

This proposition explains the induction mechanism of the explicit tower. It does not say that the minimax problem chooses the first available mode.

\subsection{Canonical faces of the tower}

Choosing equally spaced phases turns the trigonometric coordinates into a discrete Fourier matrix. A highest-frequency coordinate supplies the supporting hyperplane, while completeness of residues modulo the number of points gives affine independence.

\begin{proposition}[Even canonical face with $D-1$ vertices]
Let $D=2m$ and $N=2m-1$. For
\[
  s_j=\frac{2\pi j}{N},
  \qquad j=0,\ldots,N-1,
\]
the $N=D-1$ points of the weighted even tower $\omega_{2m}(s_j)$ are affinely independent, and their convex hull is an exposed face of dimension $D-2$.
\end{proposition}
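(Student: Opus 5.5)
The plan is to exhibit an explicit linear functional coming from the highest-frequency coordinate. Then affine independence reduces to the invertibility of a discrete Fourier matrix. I read ``exposed face'' as a face of $\conv\omega_{2m}(\R)$; the same functional also works for hulls containing the origin, since its maximal value is positive.

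First, the normalization is harmless. Each pair $(a_k\cos ks,a_k\sin ks)$ has norm $a_k$, so $\|w_{2m}(s)\|^2=1+\sum_{k\ge2}a_k^2=:c^2$ is constant. Hence $\omega_{2m}=w_{2m}/c$ is a fixed positive multiple of $w_{2m}$. It suffices to prove both claims for the points $w_{2m}(s_j)$ and the curve $w_{2m}$.

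\textbf{Supporting hyperplane.} Let $\ell(x)=x_{2m-1}$ be the coordinate carrying $a_m\cos(Ns)$, where $N=2m-1$. Along the curve $\ell(w_{2m}(s))=a_m\cos(Ns)\le a_m$, with equality exactly when $Ns\in2\pi\mathbb Z$, that is, $s\equiv s_j \pmod{2\pi}$ for some $j$. So $\{\ell=a_m\}$ supports the compact curve and meets it precisely in the $N$ points $w_{2m}(s_j)$. A supporting hyperplane of the convex hull of a compact set meets the hull in the convex hull of the set's points lying on the hyperplane. The exposed face is therefore $\conv\{w_{2m}(s_j)\}$. Since $a_m>0$, the origin lies strictly below the hyperplane, which covers the variant with $0$ added.

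\textbf{Affine independence.} All $N$ points have last two coordinates $(a_m,0)$, since $\sin(Ns_j)=0$. Affine independence is therefore equivalent to that of their projections onto the first $2m-2$ coordinates, $(a_k\cos ks_j,a_k\sin ks_j)_{k=1,3,\dots,2m-3}$ with $a_1=1$. This in turn is equivalent to invertibility of the $N\times N$ matrix $M$ whose row $j$ is
\[
\bigl(1,\cos s_j,\sin s_j,\dots,\cos((2m-3)s_j),\sin((2m-3)s_j)\bigr).
\]
Here $1+2(m-1)=N$ gives the square shape, and the positive weights $a_k$ only rescale columns. Replace each pair $(\cos ks,\sin ks)$ by $(\e^{iks},\e^{-iks})$; this is an invertible $2\times2$ column operation. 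The columns then become the characters $j\mapsto\e^{2\pi i fj/N}$ with frequencies $f\in\{0,\pm1,\pm3,\dots,\pm(2m-3)\}$.

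\textbf{Key step: residue completeness.} Modulo $N=2m-1$, the positive frequencies give the odd residues $1,3,\dots,2m-3$. The negatives $-k\equiv N-k$ give the even residues $2m-2,2m-4,\dots,2$. Together with $0$, these form all of $\mathbb Z/N$, each residue exactly once. So $M$ is, up to invertible column operations, a column permutation of the $N\times N$ DFT matrix, which is unitary up to scaling. Hence $M$ is invertible, the $N=D-1$ points are affinely independent, and the face has dimension $N-1=D-2$.

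I expect this residue-counting step to be the only real obstacle; the rest is bookkeeping. It is also exactly where the choice $N=2m-1$ equally spaced phases and the odd-frequency tower interlock.
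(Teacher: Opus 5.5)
Your proof is correct and follows essentially the same route as the paper. Exposure comes from the highest-frequency coordinate $\cos(Ns)$, and affine independence comes from the residues $0,\pm1,\pm3,\ldots,\pm(N-2)$ covering $\mathbb{Z}/N$, i.e.\ from invertibility of a discrete Fourier matrix; your differences are only in presentation (weights treated as column scalings rather than an invertible linear map to the unweighted curve, and the constant top-frequency coordinates and exactly-once residue count made explicit where the paper leaves them implicit).
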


\begin{proof}
It suffices to treat the unweighted curve $U_m$, because an invertible linear transformation preserves faces and affine independence. The coordinate $\cos(Ns)$ is at most one on the curve and equals one exactly at the $s_j$, proving exposure. If $\sum_jc_j=0$ and $\sum_jc_jU_m(s_j)=0$, the frequencies $\pm1,\pm3,\ldots,\pm(N-2)$ supply every nonzero discrete Fourier coefficient modulo $N$, while the first equality supplies frequency zero. All coefficients vanish, hence $c_j=0$.
\end{proof}

The odd case is isolated by the same Fourier argument within the explicit tower; this remains a statement about the tower, not about the globally discounted bottleneck.

\begin{proposition}[Odd canonical facet with $D$ vertices]
Let $D=2m+1$, $N=D$, and consider the normalized extension
\[
  \omega_{2m+1}(s)
  =\frac{(w_{2m}(s),b\sin Ns)}{\sqrt{\|w_{2m}\|^2+b^2\sin^2Ns}}.
\]
For
\[
  s_j=\frac{\pi/2+2\pi j}{N},
  \qquad j=0,\ldots,N-1,
\]
the $N=D$ points $\omega_{2m+1}(s_j)$ are affinely independent, and their convex hull is an exposed facet of dimension $D-1$.
\end{proposition}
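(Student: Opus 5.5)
The plan is to mirror the proof of the even canonical face: one coordinate of the curve supplies an exposing linear functional, and a discrete Fourier argument gives affine independence. The only new ingredient is the non-constant normalization, which has to be handled first.

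\emph{Exposure.} Set $c=\|w_{2m}(s)\|^2$, which is constant in $s$ because each pair $(a_j\cos((2j-1)s),a_j\sin((2j-1)s))$ has constant norm. The last coordinate of $\omega_{2m+1}(s)$ is $\phi(\sin Ns)$ with
\[
  \phi(x)=\frac{bx}{\sqrt{c+b^2x^2}},
  \qquad
  \phi'(x)=\frac{bc}{(c+b^2x^2)^{3/2}}>0 .
\]
So $\phi$ is strictly increasing on $[-1,1]$. Hence the linear functional $e_{D}$ (the last coordinate) satisfies $e_D\cdot\omega_{2m+1}(s)\le b/\sqrt{c+b^2}$. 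Equality holds exactly when $\sin Ns=1$, that is, when $Ns\equiv\pi/2\pmod{2\pi}$, which means $s\equiv s_j$ modulo $2\pi$ for some $j\in\{0,\ldots,N-1\}$. Therefore the hyperplane $\{x_D=b/\sqrt{c+b^2}\}$ supports the convex hull of the curve. Its intersection with that hull is the convex hull of the curve points lying on the hyperplane, namely $\conv\{\omega_{2m+1}(s_j)\}$. This set is therefore an exposed face.

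\emph{Affine independence.} At every $s_j$ the normalization factor equals the same constant $\sqrt{c+b^2}$, and the last coordinate equals the same value. Affine independence of the $N$ points is therefore equivalent to affine independence of $w_{2m}(s_j)$ in $\R^{2m}$. Since the weights $a_j>0$ act by an invertible diagonal map, it suffices to treat the unweighted curve. Suppose $\sum_jc_j=0$ and $\sum_jc_jw_{2m}(s_j)=0$ with real $c_j$. Combining cosine and sine coordinates gives
\[
  \sum_jc_j\e^{\pm i k s_j}=0,
  \qquad k\in\{1,3,\ldots,2m-1\},
\]
and the affine condition adds $k=0$. Writing $s_j=s_0+2\pi j/N$, each identity becomes $\e^{\pm iks_0}\sum_jc_j\e^{\pm2\pi ijk/N}=0$, and the prefactor is nonzero.

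\emph{Main obstacle: residue completeness.} The frequencies $0,\pm1,\pm3,\ldots,\pm(2m-1)$ must cover all residues modulo $N=2m+1$. The positive ones give the odd residues $1,3,\ldots,2m-1$. The negative ones give $N-1,N-3,\ldots,N-(2m-1)$, which are the even residues $2m,2m-2,\ldots,2$. Together with $0$, this is every residue exactly once, so all $N$ discrete Fourier coefficients of $(c_j)$ vanish. Invertibility of the DFT then forces $c\equiv0$. Thus the $D$ points are affinely independent, and their convex hull is a face of dimension $D-1$, i.e.\ an exposed facet.
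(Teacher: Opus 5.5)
Your proof is correct and takes the same route as the paper. The last coordinate exposes the face exactly where $\sin Ns=1$, and a discrete Fourier argument gives affine independence, using that $0,\pm1,\pm3,\ldots,\pm(2m-1)$ cover every residue modulo the odd $N$. Your explicit monotonicity of $\phi$ and residue enumeration just spell out steps the paper leaves implicit.
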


\begin{proof}
The last coordinate of $\omega_{2m+1}(s)$ is
\[
  \frac{b\sin Ns}{\sqrt{c_m^2+b^2\sin^2Ns}},
  \qquad c_m=\|w_{2m}\|,
\]
and reaches its maximum exactly when $\sin Ns=1$, namely at the $s_j$. The corresponding hyperplane exposes their convex hull.

For affine independence, suppose $\sum_jc_j=0$ and $\sum_jc_j\omega_{2m+1}(s_j)=0$. The normalization factor is the same at all $s_j$. The frequencies $\pm1,\pm3,\ldots,\pm(2m-1)$ represent the $N-1$ nonzero residue classes modulo the odd integer $N$; together with frequency zero from $\sum_jc_j=0$, they form the full discrete Fourier transform. Hence every $c_j$ vanishes.
\end{proof}

These faces are exact within the explicit tower. They prove neither that the discounted bottleneck face is canonical, nor that full rank is universal, nor that the tower is minimax optimal.

\subsection{The exponential-raked form}

In the even case, for a normal $n$, the function
\[
  p_n(\tau)=n\cdot\omega(\sigma-\tau)
\]
is a trigonometric polynomial with odd frequencies. The discounted support constraint
\[
  \e^{-\kappa\tau}p_n(\tau)\le\mu
\]
is equivalent to
\[
  G(\tau):=\mu\e^{\kappa\tau}-p_n(\tau)\ge0.
\]
At old interior contacts,
\[
  G(\tau_i)=G'(\tau_i)=0.
\]
This exponential-raked form is well suited to interval verification: it separates the classical trigonometric polynomial from the exponential tilt specific to Shoreline search.

\section{Candidate families and characteristic equations}

Precessing and harmonic families are now treated as explicit candidates constrained by the universal identities above.

\subsection{The generic full-rank case}

A numerical solve cannot adjust delays and pitch freely. Contact, tangency, barycenter, pitch, and global face domination must hold simultaneously.

For $r=D$, seek
\[
  \kappa,n,\mu,\tau_1,\ldots,\tau_{D-1},
  \beta_0,\ldots,\beta_{D-1},
\]
with $\tau_0=0$, $P_0=\omega(\sigma)$, and
\[
  P_i=\e^{-\kappa\tau_i}\omega(\sigma-\tau_i).
\]
The local conditions are
\[
  \|n\|=1,
  \qquad
  \langle n,P_i\rangle=\mu,
  \qquad
  \langle n,\kappa\omega_i+\omega_i'\rangle=0
  \quad(i\ge1),
\]
\[
  \mu n=\sum_i\beta_iP_i,
  \qquad
  \beta_i>0,
  \qquad
  \sum_i\beta_i=1,
  \qquad
  \sum_i\beta_i\tau_i=\frac{1}{\kappa(1+\kappa^2)}.
\]
They must be supplemented by
\[
  \langle n,\e^{-\kappa\tau}\omega(\sigma-\tau)\rangle\le\mu
  \qquad\text{for every }\tau\ge0.
\]
When $\|\omega'\|=1$, the associated ratio is
\[
  \ccand_D=\frac{\sqrt{1+\kappa^2}}{\kappa\mu}.
\]
These are necessary conditions within the chamber. One must still verify closure, equalization of worst phases, switches, and absence of a closer competing face.

\subsection{Explicit prototypes in dimensions three and four}

The $D=3$ prototype is a global curve evaluated only numerically. In an auxiliary periodic parameter $s$, set
\[
  \lambda(s)=A\sin(2s),
  \qquad
  \theta(s)=s+b_4\sin(4s),
\]
\[
  u(s)=\bigl(\cos\lambda(s)\cos\theta(s),
  \cos\lambda(s)\sin\theta(s),
  \sin\lambda(s)\bigr),
\]
with
\[
  A\approx0.6841393588,
  \qquad
  b_4\approx-0.1701762692,
  \qquad
  \kappa_s\approx0.2147414236.
\]
Screening gives a ratio near $25.600134$. This number is neither a global lower bound nor an interval-certified upper bound.

In $D=4$, the data
\[
  C_4\approx41.68613,
  \qquad
  \kappa\approx0.1069799455
\]
come only from a local active-tetrahedron configuration, with approximate delays
\[
  0,\quad3.85249085,\quad7.60682906,\quad11.70470150
\]
and barycentric weights
\[
  (0.02823217,\ 0.11600646,\ 0.29809974,\ 0.55766163).
\]
It is not presented as a globally audited orbit.

\subsection{A working atlas for dimensions one through ten}

The table intentionally mixes several proof levels, stated in the final column. Screening values are not values of $\cstar_D$.

\begin{center}
\scriptsize
\begin{tabularx}{\linewidth}{@{}c r r r r X@{}}
\toprule
$D$ & working $C_D$ & $C_D/D^{3/2}$ & $\kappa$ & $Q$ & status\\
\midrule
1 & 9.000000 & 9.000 & -- & 2.0000 & exact global optimum\\
2 & 13.811135 & 4.883 & 0.212470 & 3.79994 & stationary-family benchmark; global status open\\
3 & 25.600134 & 4.927 & 0.214741 & 3.85457 & explicit global precessing curve; uncertified numerical ratio\\
4 & 41.686132 & 5.211 & 0.106980 & -- & local active-tetrahedron data; incomplete global orbit\\
5 & 57.361155 & 5.131 & 0.090898 & 3.90288 & explicit harmonic candidate; measured, not certified\\
6 & 70.648120 & 4.807 & 0.087551 & 3.80815 & even tower; strengthened but uncertified screening\\
7 & 91.398639 & 4.935 & 0.074241 & 4.13626 & odd tower; strengthened but uncertified screening\\
8 & 106.457632 & 4.705 & 0.068124 & 4.24849 & even tower; strengthened but uncertified screening\\
9 & 132.656229 & 4.913 & 0.053607 & 3.88395 & odd tower; first strengthened uncertified estimate\\
10 & 149.096348 & 4.715 & 0.052686 & 3.72363 & even tower; first strengthened uncertified estimate\\
\bottomrule
\end{tabularx}
\end{center}

\section{N-COMP: effective computability in fixed finite dimension}

The effectively bounded cell theorem~\cite[Theorem~5.1]{koch-cell} is already dimension-independent. N-COMP is not a second, independent self-similar reduction. It combines that bounded cell theorem with dimension-free polygonalization and a semialgebraic presentation valid at every fixed dimension.

The theorem successively removes three infinities: infinite history, rectifiable-cell complexity, and continuous parameter optimization. It is not a practical solver; it says only that no logical infinity remains at a prescribed precision.

\begin{theorem}[N-COMP]
For every fixed integer $D\ge1$ and every rational $\varepsilon>0$, there is a terminating algorithm producing rationals
\[
  L_{D,\varepsilon}\le U_{D,\varepsilon}
\]
such that
\[
  L_{D,\varepsilon}\le\cstar_D\le U_{D,\varepsilon},
  \qquad
  U_{D,\varepsilon}-L_{D,\varepsilon}\le\varepsilon.
\]
In particular, $\cstar_D$ is a computable real for every fixed finite $D$. A constructive quantifier-elimination procedure can also produce a self-similar polygonal cell with algebraic coordinates and ratio at most $\cstar_D+\varepsilon$.
\end{theorem}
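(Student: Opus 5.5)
The plan is to remove the three infinities in the order the paper announces: infinite history, then cell complexity, then continuous optimization. The upper bound $U_{D,\varepsilon}$ and the lower bound $L_{D,\varepsilon}$ will be built separately. Both come from one family of finite-dimensional semialgebraic decision problems, indexed by a complexity parameter $N$.

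\textbf{Step 1: bounded cells.} I would invoke the effectively bounded cell theorem~\cite[Theorem~5.1]{koch-cell}. For a prescribed loss $\eta$, it gives a computable bound on the cell data: the logarithmic height $\log Q$ (at most the sliding-memory horizon $\log C$ of~\cite[Theorem~12.2]{koch-cell}) and the normalized length. It guarantees that the infimum over bilateral repetitions of such bounded cells is within $\eta$ of $\cstar_D$.

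\textbf{Step 2: polygonalization.} I would replace a bounded rectifiable cell by an inscribed polygonal cell with $N$ vertices and rational coordinates, closing it up so that the homothety $\Gamma(s+T)=Q\Gamma(s)$ still holds. Inscribing never increases length. By the support--inradius formula $\CR=\sup_t t/m(t)$ (\cite[Proposition~7.1]{koch-cell}), it remains to control the loss in $m(t)$. The hull of the polygon's past is within Hausdorff distance $\delta$ of the true hull, so $m$ drops by at most $\delta$. Since $m$ is bounded below on the normalized cell, the ratio rises by at most a computable factor $1+O(\delta)$. Combined with Step 1, this gives an explicit $N(\varepsilon,D)$ such that some polygonal cell with $N$ vertices has ratio at most $\cstar_D+\varepsilon/2$. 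The supremum over infinitely many repetitions reduces to one cell, because in normalized coordinates the hull is $\conv(aH\cup c)$ and older copies are dominated after a bounded number of generations (the sliding-memory horizon again).

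\textbf{Step 3: a first-order formula.} For fixed $D$ and $N$, the statement ``there is a polygonal cell with vertices $x_1,\dots,x_N\in\R^D$ and factor $Q$ whose bilateral repetition has ratio $\le C$'' is first-order over the reals. The hull of finitely many scaled copies of the vertices is polytopal. The condition is that for all $t$ along the edges and all unit $u$, some earlier vertex (or point on the current edge) has projection at least $t/C$. Edge lengths involve square roots, which are handled by auxiliary variables $\ell_k^2=\|x_{k+1}-x_k\|^2$, $\ell_k\ge0$. Tarski--Seidenberg quantifier elimination decides this formula for each rational $C$. Bisection then yields the optimal value $V_N$ of the $N$-vertex problem to any precision, together with an algebraic witness cell.

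\textbf{Step 4: the two bounds.} For the upper bound, $V_N\ge\cstar_D$ because each witness is a genuine strategy, and $V_N\le\cstar_D+\varepsilon/2$ by Step 2. So $U:=V_N+\varepsilon/4$, rounded, is a valid upper bound within $\varepsilon$. For the lower bound, I would set $L:=V_N-\varepsilon/2-\varepsilon/4$; this needs the reverse inequality $\cstar_D\ge V_N-\varepsilon/2$. That inequality is exactly the polygonal approximation of Step 2 read backwards, so it follows once the loss in Step 2 is bounded by a quantity computable from $D$ and $\varepsilon$ alone, without knowing $\cstar_D$.

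\textbf{Main obstacle.} I expect the main obstacle to be making Step 2 effective and uniform. The discretization error must be bounded with explicit constants independent of the unknown optimal cell, including near phases where $m(t)$ is smallest and near the seam where cells are glued. I would first try to use the a priori lower bound $m(t)\ge t/C$ together with the Lipschitz dependence of support functions on Hausdorff distance, which gives $|h-h'|\le\delta$.
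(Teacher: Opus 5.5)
Your architecture is the paper's: the bounded cell of \cite[Theorem~5.1]{koch-cell}, conservative polygonalization, a first-order feasibility formula decided by quantifier elimination, and bisection on the $N$-vertex value $V_N$, with $L_{D,\varepsilon}$ obtained from $V_N\le\cstar_D+\text{loss}$. That last inequality is not Step~2 ``read backwards''; it is literally Step~2's existence statement.

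\textbf{The gap.} There is a genuine gap, at exactly the point you flag and leave open: as written, $N$ is not computable. Every effective constant you invoke depends on a ratio bound $C_0\ge\cstar_D+\eta$ that must be a known rational. This covers the bound on $Q$ from Theorem~5.1, the ratio of cell length to minimal level, and hence the chord length $h$ and $N$. Using $m(t)\ge t/C$ with $C$ close to $\cstar_D$ is circular, since $\cstar_D$ is the quantity being computed.

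\textbf{The missing ingredient.} You need an explicit strategy with computable ratio. The paper takes the cell $e_1\to\cdots\to e_D\to-e_1\to\cdots\to-e_D\to2e_1$, with $Q=2$ and length $(2D-1)\sqrt2+\sqrt5$. Its hull contains the cross-polytope of inradius $1/\sqrt D$, and the previous copy guarantees level $1/(2\sqrt D)$ throughout the current cell. Hence $\cstar_D\le4\sqrt D\,\bigl((2D-1)\sqrt2+\sqrt5\bigr)<U_D$, and one sets $C_0=U_D+1$. Then:
\begin{enumerate}[label=(\roman*)]
\item The cost at every phase is at least $L(C)/(Q-1)$, so after rescaling to $m_{\min}=1$ one gets $L(C)\le C_0(Q_{\max}-1)$.
\item Replacing arcs of length at most $h$ by chords inflates the ratio by at most $1/(1-h)$. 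To see this, compare a point on the $i$th chord with the original at the end of the $i$th arc: cost does not increase, and the original hull lies in the polygonal hull plus $h\BD$. Only this one-sided, phase-matched inclusion is needed.
\item Consequently $N=\lceil C_0(Q_{\max}-1)/h\rceil$ is computable from $D$ and $\varepsilon$.
\end{enumerate}

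\textbf{The bound on $Q$.} Your parenthetical claim that $\log Q$ is bounded by the sliding-memory horizon $\log C$ is false. Sliding memory bounds the age of useful memory, not the period of a near-returning cell, and the paper says so explicitly. The upper bound $Q_{\max}(D,U_D,\eta)$ must come from Theorem~5.1 itself. There it depends on $\eta$ and, once again, on the a priori bound $U_D$.

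You also need a prescribed lower bound on $Q$; the paper chooses the return gap so that $Q\ge4$. Without it, the number of past copies $\lambda^kP$ your formula must include to reproduce the true hull at threshold $c$ grows without bound as $Q\downarrow1$. Your feasibility condition would then not be a single fixed first-order sentence. With $Q\ge4$ imposed, about $\log_4 c+1$ copies suffice, and the near-optimal cell from Theorem~5.1 still satisfies the constraint.
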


\begin{proof}
The case $D=1$ is exact. Assume $D\ge2$, put
\[
  \varepsilon_0=\min\{\varepsilon,1\},
  \qquad
  \eta=\varepsilon_0/16.
\]

\paragraph{1. A computable upper bound.}
Consider the cell of factor $Q=2$
\[
  e_1\to e_2\to\cdots\to e_D\to-e_1\to\cdots\to-e_D\to2e_1.
\]
Its length is
\[
  L_D^\times=(2D-1)\sqrt2+\sqrt5.
\]
Its terminal hull contains the cross-polytope $\conv\{\pm e_i\}$, whose centered inradius is $1/\sqrt D$. The previous copy guarantees level at least $1/(2\sqrt D)$ throughout the current cell, while total cost is at most $2L_D^\times$. Thus
\[
  \cstar_D\le4\sqrt D\,L_D^\times.
\]
Choose a computable rational $U_D$ strictly above this bound.

\paragraph{2. A uniformly bounded near-optimal cell.}
By definition of the infimum, there is a trajectory with ratio at most $\cstar_D+\eta<U_D+1$. Apply~\cite[Theorem~5.1]{koch-cell}, using a prescribed return gap large enough to ensure $Q\ge4$. It produces a cell $C$ with
\[
  4\le Q\le Q_{\max}(D,U_D,\eta),
  \qquad
  \mathcal R_Q(C)\le\cstar_D+2\eta
  =\cstar_D+\varepsilon_0/8,
\]
and with computably bounded normalized length. Let $m_{\min}$ be the minimum guaranteed level through the cell and rescale so that $m_{\min}=1$. Since the historical cost at every phase is at least $L(C)/(Q-1)$,
\[
  \frac{L(C)}{Q-1}\le (U_D+1)m_{\min},
\]
so after this rescaling
\[
  L(C)\le\widetilde L_{\max}:=(U_D+1)(Q_{\max}-1),
\]
a computable bound.

\paragraph{3. Dimension-free conservative polygonalization.}
We spell out the point that was previously only asserted. Let a rectifiable cell in $\R^D$ satisfy $m_C(s)\ge1$ at every phase and $L(C)\le\widetilde L_{\max}$. Partition it into arcs of length at most $h<1$ and replace each arc by its chord, keeping the cut points. For a point $z$ on the $i$th chord, compare the polygonal prefix at $z$ with the original prefix at the end $s_i$ of the corresponding arc. Every original point visited by time $s_i$ lies within distance $h$ of a cut point already available to the polygonal prefix; the same holds for the terminal cell hull. Hence the original historical hull is contained in the Minkowski sum of the polygonal historical hull and $h\BD$. Support functions therefore satisfy
\[
  h_{\mathrm{poly}}(u)\ge h_{\mathrm{orig}}(u)-h
  \qquad\text{for every }u\in\Sph^{D-1},
\]
and consequently $m_{\mathrm{poly}}(z)\ge m_C(s_i)-h$. Chords do not increase travelled length, so the polygonal cost at $z$ is at most the original cost at $s_i$. Thus, in every finite dimension,
\[
  \mathcal R_Q(C_N)\le\frac{\mathcal R_Q(C)}{1-h}.
\]
This is the dimension-free form of the conservative polygonalization argument in~\cite[Theorem~8.3]{koch-cell}.

Set $C_0=U_D+1$ and
\[
  h=\min\left\{\frac12,\frac{\varepsilon_0}{16C_0}\right\}.
\]
The resulting polygonal cell has at most
\[
  N=\left\lceil\frac{\widetilde L_{\max}}{h}\right\rceil
\]
edges and satisfies, using $1/(1-h)\le1+2h$,
\[
  \mathcal R_Q(C_N)
  \le\frac{\mathcal R_Q(C)}{1-h}
  \le\cstar_D+\varepsilon_0/4.
\]

\paragraph{4. A computable compact box.}
Normalize total cell length to one and put $\lambda=Q^{-1}$. Then
\[
  Q_{\max}^{-1}\le\lambda\le\frac14,
  \qquad
  p_0=\lambda p_n.
\]
Since $\|p_n-p_0\|=(Q-1)\|p_0\|\le1$ and $Q\ge4$, one has $\|p_0\|\le1/3$. Every vertex lies within path length one of $p_0$, so
\[
  \|p_i\|\le\frac43.
\]
The near-optimal cell therefore lies in a computable compact box.

\paragraph{5. Semialgebraic feasibility.}
For a polygonal cell $p_0,\ldots,p_n$ define, explicitly,
\[
  P:=\conv\{p_0,\ldots,p_n\},
  \qquad
  \ell_i=\|p_i-p_{i-1}\|,
  \qquad
  \sum_{i=1}^n\ell_i=1.
\]
During edge $e$ at phase $\alpha\in[0,1]$, put
\[
  z_{e,\alpha}=p_{e-1}+\alpha(p_e-p_{e-1}),
\]
\[
  K_{e,\alpha}
  =\conv\bigl(\lambda P\cup\{p_0,\ldots,p_{e-1},z_{e,\alpha}\}\bigr),
\]
\[
  A_{e,\alpha}
  =\frac{\lambda}{1-\lambda}
   +\sum_{i<e}\ell_i+\alpha\ell_e.
\]
The condition $\mathcal R_Q\le c$ is equivalent to
\[
  \forall e,\ \forall\alpha\in[0,1],\ \forall u,
  \quad \|u\|=1
  \Longrightarrow
  c\,h_{K_{e,\alpha}}(u)\ge A_{e,\alpha}.
\]
The support of a finite convex hull is a finite maximum of scalar products, and edge lengths are introduced by quadratic equations. Together with the compact box, feasibility is a first-order formula over the reals with rational coefficients and is therefore decidable exactly by quantifier elimination~\cite{basu-pollack-roy}.

\paragraph{6. Finite minimum and rational bisection.}
Taking the finite disjunction over $1\le n\le N$ gives an attained minimum $W_{D,N}$ with
\[
  \cstar_D\le W_{D,N}\le\cstar_D+\varepsilon_0/4.
\]
A rational bisection using the feasibility oracle encloses $W_{D,N}$ in $[a,b]$ with $b-a\le\varepsilon_0/2$. Hence
\[
  a-\varepsilon_0/4\le\cstar_D\le b,
\]
and the interval has width at most $3\varepsilon_0/4\le\varepsilon$. Constructive quantifier elimination also returns an algebraic witness below every feasible rational threshold.
\end{proof}

\begin{remark}
N-COMP is a computability theorem, not a competitive numerical solver. The covering numbers in~\cite[Theorem~5.1]{koch-cell} and the complexity of quantifier elimination are enormous.
\end{remark}

\section{Numerical certification and falsification}

Dimensional candidates must be tested by searches that do not impose the candidates' own form.

\subsection{Free polygonal cells}

For fixed $D$ and $N$ edges, optimize freely over
\[
  p_0,\ldots,p_{N-1}\in\R^D,
  \qquad Q>1,
  \qquad p_N=Qp_0,
\]
while evaluating the true online quotient on every prefix. Floating-point optimization may discover competitors; it does not certify their ratios.

\subsection{Certified directional nets}

A random cloud of directions never certifies an inradius. If $H\subset R\BD$ and $\mathcal N_\varepsilon$ is an $\varepsilon$-net of $\Sph^{D-1}$, the support function is $R$-Lipschitz and
\[
  \inrad(H)\ge\min_{v\in\mathcal N_\varepsilon}h_H(v)-R\varepsilon.
\]
The remainder $R\varepsilon$ is indispensable. Net cost typically grows like $\varepsilon^{-(D-1)}$; this is a quantitative obstacle, not a logical one.

\subsection{Exploratory screening through $D=10$}

For $D=2m$, the screening uses the even tower
\[
  w(s)=\bigl(a_1\cos s,a_1\sin s,\ldots,
  a_m\cos((2m-1)s),a_m\sin((2m-1)s)\bigr),
\]
with $a_1=1$. For $D=2m+1$, an axial component $b\sin((2m+1)s)$ is added. After normalization and, when necessary, reparametrization by spherical arclength, set
\[
  Q=\e^{\kappa T_{\mathrm{per}}}.
\]
The search alternates phase sampling, normal sampling, and local bottleneck refinement. It remains screening until phases and directions are covered with outward error bounds.

The retained weights for $D=5,\ldots,10$ are:

\begin{center}
\scriptsize
\begin{tabularx}{\linewidth}{@{}c X r r r@{}}
\toprule
$D$ & harmonic weights, $a_1=1$ & $Q$ & $\kappa$ & measured $C$\\
\midrule
5 & $(1,0.638966,0.623845)$; final component axial & 3.90288 & 0.090898 & 57.3612\\
6 & $(1,0.522860,0.461342)$ & 3.80815 & 0.087551 & 70.6481\\
7 & $(1,0.633023,0.499238,0.490921)$; final component axial & 4.13626 & 0.074241 & 91.3986\\
8 & $(1,0.552117,0.477382,0.463268)$ & 4.24849 & 0.068124 & 106.4576\\
9 & $(1,0.719643,0.518417,0.511247,0.514496)$; final component axial & 3.88395 & 0.053607 & 132.6562\\
10 & $(1,0.710321,0.465677,0.410313,0.403021)$ & 3.72363 & 0.052686 & 149.0963\\
\bottomrule
\end{tabularx}
\end{center}

\begin{figure}[H]
  \centering
  \includegraphics[width=0.92\linewidth]{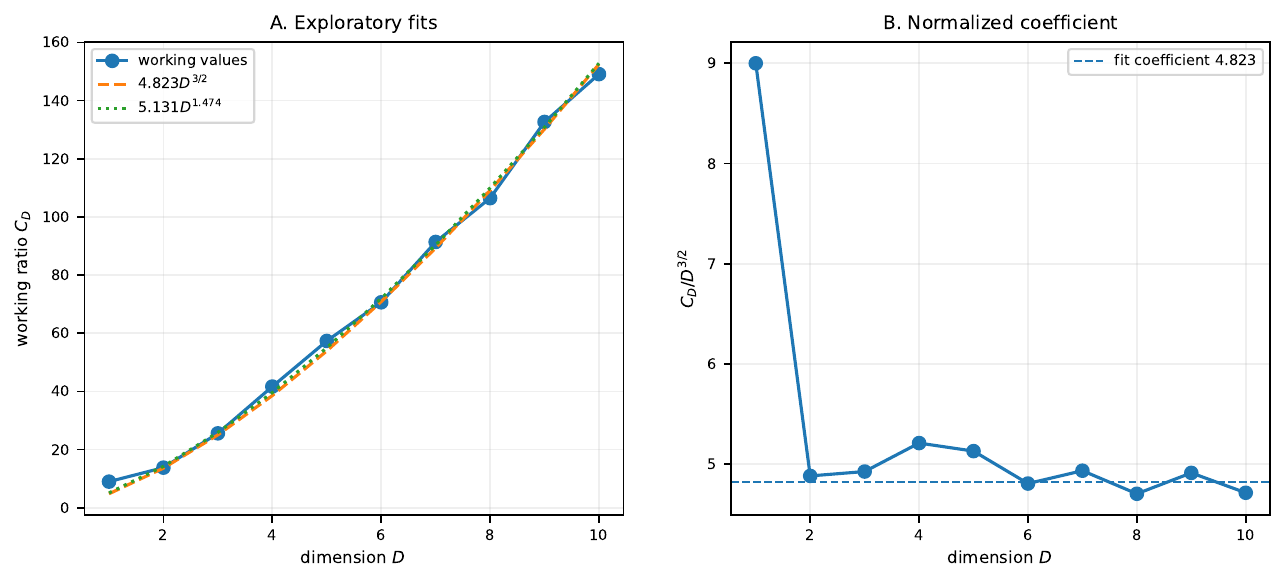}
  \caption{Working atlas for $D=1,\ldots,10$. The fits and normalized coefficient concern candidate values; they are not certified estimates of $\cstar_D$.}
\end{figure}

Over $D=2,\ldots,10$, the working values give the exploratory fits
\[
  \cwork_D\simeq4.823D^{3/2},
  \qquad
  \cwork_D\simeq5.131D^{1.474}.
\]
Over $D=5,\ldots,10$,
\[
  \frac16\sum_{D=5}^{10}D\kappa_D\simeq0.509,
  \qquad
  \frac16\sum_{D=5}^{10}Q_D\simeq3.951.
\]
These are small-dimensional diagnostics, not estimates of limiting constants.

\section{The numerical atlas and the separation of two certificates}

To establish
\[
  \ccand_D-\cstar_D\le\delta,
\]
one needs two independent certificates:
\[
  L_D^{\mathrm{global}}\le\cstar_D,
  \qquad
  \ccand_D\le U_D^{\mathrm{cand}},
  \qquad
  U_D^{\mathrm{cand}}-L_D^{\mathrm{global}}\le\delta.
\]
The first bounds the global optimum from below; the second certifies only the chosen candidate. Current screening supplies explicit curves, not yet both certificates.

For $D=1$, the two coincide and give $\cstar_1=9$. For $D=2$, the known global lower bound and the spiral benchmark remain separate. For $D=3$ and $D=5$, global candidates are available without corresponding lower bounds. For $D=6$ through $10$, the displayed values are reproducible floating-point ratios, not rigorous upper bounds.

The certification route is clear: cover phases and normals by boxes with Lipschitz remainders to obtain $U_D^{\mathrm{cand}}$; independently, use N-COMP or a Bellman branch-and-bound to obtain $L_D^{\mathrm{global}}$.

\section{Large dimension and comparison with the literature}

The literature places the minimax value at order
\[
  \cstar_D=\Theta(D^{3/2})
\]
up to multiplicative constants~\cite{antoniadis-hyperplane,bansal-cowpath}. The new question is therefore not the exponent, but the structure of candidates and the possible existence of an asymptotic constant.

Our screening studies
\[
  \frac{\ccand_D}{D^{3/2}},
  \qquad
  D\kappa_D,
  \qquad
  Q_D.
\]
Six dimensions do not support serious estimation of limits; they motivate only the following conjectures.

\begin{remark}[Limits within a harmonic/precessing family]
For the best candidates in a precisely specified class, one may conjecture positive constants $c_{\mathrm{harm}}$, $\lambda_{\mathrm{harm}}$, and $Q_\infty$ such that
\[
  \frac{C_D^{\mathrm{harm}}}{D^{3/2}}\to c_{\mathrm{harm}},
  \qquad
  D\kappa_D^{\mathrm{harm}}\to\lambda_{\mathrm{harm}},
  \qquad
  Q_D^{\mathrm{harm}}\to Q_\infty.
\]
The small dimensions in this article are not proposed as estimates of these limits.
\end{remark}

\begin{remark}[A minimax asymptotic constant]
A stronger conjecture is the existence of
\[
  \lim_{D\to\infty}\frac{\cstar_D}{D^{3/2}}=c_\infty>0.
\]
If a harmonic/precessing family proved asymptotically optimal, then $c_\infty=c_{\mathrm{harm}}$.
\end{remark}

\subsection{The static proxy in dimension three}

The static problem of the shortest closed curve whose convex hull contains the unit ball of $\R^3$ has a baseball-seam-type solution of length $4\pi$~\cite{ghomi-wenk}. This does not give the online Shoreline value in $D=3$, but it is a consistency check for the appearance of precessing geometry.

\section{Open questions}

The following obstacles remain logically separate from the preceding results.

\begin{enumerate}[label=\arabic*.]
\item \textbf{Productive synchronization.} Can one construct a globally monotone productive phase that preserves true first passages, including angular returns, multiple winding, and transient suppliers?
\item \textbf{Reduction to a primitive cycle.} Can a near-optimal macrocell containing several directional cycles be shortened while retaining the full Bellman state?
\item \textbf{Global no breathing.} Can a nonstationary cycle of storage and handoff have positive minimax yield?
\item \textbf{Stationarization.} Can a near-optimal orbit be replaced without loss by a relative equilibrium in the scale quotient?
\item \textbf{Optimal precession.} Nothing yet proves that the minimax problem selects the harmonic tower or a particular nested precession.
\item \textbf{Antipodality.} It simplifies memory exactly within a subclass but is not globally forced.
\item \textbf{Parity of faces.} Canonical tower faces are explicit; the global classification of discounted bottleneck faces remains open.
\item \textbf{A practical algorithm.} How can N-COMP be turned into a certified branch-and-bound exploiting sliding memory, active rank, age, stratified KKT conditions, and directional nets?
\end{enumerate}

\section{Conclusion}

Passing from dimension one to arbitrary finite dimension does not mean replacing a simple curve by a complicated one. It means isolating the mechanisms that survive the change of dimension:
\[
  \text{exponential growth}
  +\text{ memory by maximum}
  +\text{ oscillation of service on }\Sph^{D-1}.
\]

Dimension one realizes this mechanism by alternation and gives $\cstar_1=9$; dimension two turns the same principle into continuous rotation; dimension three is the first case in which a fixed skew-symmetric generator is impossible and the rotation planes must vary.

Without harmonic assumptions, the bottleneck of an exponential orbit has a certificate supported by at most $D$ historical suppliers. In a regular chamber, this certificate yields the envelope law, supplier tangencies, the pitch identity, an age bound, and, in $D=3$, a finite regular implicit delay system. These results do not reduce the functional Bellman state to a finite state and do not prove global stationarity.

The antipodal subclass has an exact two-half-cell memory. The harmonic tower respects normalized projection between dimensions, introduces successive odd modes, and has explicit canonical faces in both even and odd cases. The feature specific to Shoreline search is the coupling of this geometry to exponential discounting and the true online ratio.

Independently of any structural selection, N-COMP proves that $\cstar_D$ is computable for every fixed finite dimension. The stronger structural problem remains: rule out profitable storage-and-handoff cycles, understand switches, and determine whether the minimax problem ultimately selects a particular precessing class.


\begin{thebibliography}{99}

\bibitem{koch-cell}
F.~Koch,
\emph{Bellman Search in Arbitrary Finite Dimension: A Self-Similar Cell Theorem and Effective Computability of Planar Shoreline Search},
Version~8.3, Zenodo, 2026.
\href{https://doi.org/10.5281/zenodo.22152752}{doi:10.5281/zenodo.22152752}.

\bibitem{bellman-dp}
R.~Bellman,
\emph{Dynamic Programming},
Princeton University Press, 1957.

\bibitem{beck-newman}
A.~Beck and D.~J.~Newman,
``Yet More on the Linear Search Problem,''
\emph{Israel Journal of Mathematics} 8 (1970), 419--429.

\bibitem{baeza-yates}
R.~A.~Baeza-Yates, J.~C.~Culberson, and G.~J.~E.~Rawlins,
``Searching in the Plane,''
\emph{Information and Computation} 106 (1993), 234--252.

\bibitem{finch-zhu}
S.~R.~Finch and L.-Y.~Zhu,
``Searching for a Shoreline,''
arXiv:math/0501123, 2005.

\bibitem{finch-spiral}
S.~R.~Finch,
``The Logarithmic Spiral Conjecture,''
arXiv:math/0501133, 2005.

\bibitem{gal-chazan}
S.~Gal and D.~Chazan,
``On the Optimality of the Exponential Functions for Some Minimax Problems,''
\emph{SIAM Journal on Applied Mathematics} 30 (1976), 324--348.

\bibitem{antoniadis-hyperplane}
A.~Antoniadis, R.~Hoeksma, S.~Kisfaludi-Bak, and K.~Schewior,
``Online Search for a Hyperplane in High-Dimensional Euclidean Space,''
\emph{Information Processing Letters} 177 (2022), 106262.

\bibitem{bansal-cowpath}
N.~Bansal, J.~Kuszmaul, and W.~Kuszmaul,
``A Nearly Tight Lower Bound for the $d$-Dimensional Cow-Path Problem,''
\emph{Information Processing Letters} 182 (2023), 106389.

\bibitem{ghomi-wenk}
M.~Ghomi and J.~Wenk,
``Shortest Closed Curve to Contain a Sphere in Its Convex Hull,''
\emph{Bulletin of the London Mathematical Society} 56 (2024), 2472--2482.

\bibitem{temerev}
A.~Temerev,
``A New Unconditional Lower Bound for Shoreline Search,''
arXiv:2608.18362, 2026.

\bibitem{barvinok-novik}
A.~Barvinok and I.~Novik,
``A Centrally Symmetric Version of the Cyclic Polytope,''
\emph{Discrete \& Computational Geometry} 39 (2008), 76--99.

\bibitem{barvinok-lee-novik-neighborliness}
A.~Barvinok, S.~J.~Lee, and I.~Novik,
``Neighborliness of the Symmetric Moment Curve,''
\emph{Mathematika} 59 (2013), 223--249.

\bibitem{barvinok-lee-novik-faces}
A.~Barvinok, S.~J.~Lee, and I.~Novik,
``Centrally Symmetric Polytopes with Many Faces,''
\emph{Israel Journal of Mathematics} 195 (2013), 457--472.

\bibitem{vinzant}
C.~Vinzant,
``Edges of the Barvinok--Novik Orbitope,''
\emph{Discrete \& Computational Geometry} 46 (2011), 479--487.

\bibitem{king-mixon}
E.~J.~King and D.~G.~Mixon,
``Short Spherical $t$-Design Curves,''
arXiv:2607.15386, 2026.

\bibitem{nudelman}
A.~A.~Nudelman,
``Isoperimetric Problems for the Convex Hulls of Polygonal Lines and Curves in Multidimensional Spaces,''
\emph{Mathematics of the USSR-Sbornik} 25 (1975), 276--294.

\bibitem{schneider}
R.~Schneider,
\emph{Convex Bodies: The Brunn--Minkowski Theory}, 2nd ed.,
Cambridge University Press, 2014.

\bibitem{santalo}
L.~A.~Santaló,
\emph{Integral Geometry and Geometric Probability}, 2nd ed.,
Cambridge University Press, 2004.

\bibitem{basu-pollack-roy}
S.~Basu, R.~Pollack, and M.-F.~Roy,
\emph{Algorithms in Real Algebraic Geometry}, 2nd ed.,
Springer, 2006.

\bibitem{weihrauch}
K.~Weihrauch,
\emph{Computable Analysis: An Introduction},
Springer, 2000.

\end{thebibliography}
\end{document}